\documentclass[11pt, a4paper]{article}

\usepackage[a4paper,margin=2.5cm]{geometry}
\usepackage[utf8]{inputenc}
\usepackage{booktabs}
\usepackage[small]{caption}
\usepackage{lmodern}

\usepackage{amsmath, amsthm, amsfonts, amssymb, mathtools,booktabs,placeins, dsfont}

\usepackage{xspace}

\usepackage{tikz}
\usepackage{aligned-overset}

\usepackage{hyperref}
\usepackage{thm-restate}
\usepackage{enumerate}

\usepackage{bm}

\usepackage{todonotes}

\usepackage[sort&compress,numbers]{natbib}
\usepackage{cleveref}
\theoremstyle{plain}
\newtheorem{theorem}{Theorem}[section]
\newtheorem{definition}[theorem]{Definition}
\newtheorem{lemma}[theorem]{Lemma}

\newtheorem{proposition}[theorem]{Proposition}

\theoremstyle{remark}
\newtheorem*{remark}{Remark}

\newtheorem{example}[theorem]{Example}
\newtheorem{observation}[theorem]{Observation}

\definecolor{blue}{HTML}{0072B2}
\definecolor{lightblue}{HTML}{56B4E9}
\definecolor{rose}{HTML}{CC79A7}
\definecolor{orange}{HTML}{D55E00}
\definecolor{yellow}{HTML}{F0E442}
\definecolor{green}{HTML}{009E73}

\newcommand{\ALG}{\textsc{Alg}\xspace}
\newcommand{\OPT}{\textsc{Opt}\xspace}
\newcommand{\FIFO}{\textrm{FIFO}\xspace}
\newcommand{\LRU}{\textrm{LRU}\xspace}
\newcommand{\rank}[1]{\operatorname{rank}(#1)}

\newcommand{\pageUniverse}{V}
\newcommand{\numPages}{n}
\newcommand{\cache}[1]{C_{#1}}
\newcommand{\initialCache}{\cache{0}}

\newcommand{\requestPage}[1]{v_{#1}}
\newcommand{\arrival}[1]{a_{#1}}
\newcommand{\deadline}[1]{d_{#1}}
\newcommand{\delay}[1]{\operatorname{delay}_{#1}}
\newcommand{\serviceTime}[1]{s_{#1}}

\newcommand{\thresholdValue}{\delta}
\newcommand{\paymentCount}{P}
\newcommand{\paymentPage}[1]{v_{#1}}
\newcommand{\fixedPage}[1]{f_{#1}}
\newcommand{\paymentTime}[1]{t_{#1}}
\newcommand{\paymentRequests}[1]{R_{#1}}
\newcommand{\paymentWindow}[1]{W_{#1}}

\newcommand{\offlineFetchCost}{F_\OPT}
\newcommand{\offlineDelayCost}{D_\OPT}
\newcommand{\potential}{\Phi}
\newcommand{\potentialAt}[1]{\potential_{#1}}
\newcommand{\finalPotential}{\potentialAt{\mathrm{final}}}

\newcommand{\candidates}{\mathcal{C}}
\newcommand{\offlineBudget}{B}

\renewcommand{\O}{\mathcal O}

\usetikzlibrary{arrows.meta,positioning,shapes.misc, decorations.pathreplacing}

\tikzset{
    alg/.style={draw=blue, very thick, yshift=2pt},
    opt/.style={draw=orange, very thick, yshift=-2pt},
    release/.style={solid, circle, draw=black, fill=white, inner sep=0.5ex, thick},
    threshold/.style={solid, draw=black, shape=cross out, minimum size=1.4ex, inner sep=0pt, thick}
}

\newcommand{\mytitle}[1]{
    \begingroup
    \fontsize{16pt}{18pt}\fontseries{bx}\selectfont \centering #1 \par  
    \endgroup
}
\newcommand{\myauthors}[1]{
    \begingroup
    \fontsize{12pt}{14pt}\fontseries{bx}\selectfont \centering #1 \par  
    \endgroup
}

\begin{document}

\mytitle{Two Holes Make a Difference: \\Deterministic Paging with Delay and Deadlines}

\bigskip
\bigskip

\myauthors{Darya Melnyk\footnote{TU Berlin, Germany}, Linda Thelen\footnote{TU Darmstadt, Germany}, Yuyi Wang\footnote{CRRC Zhuzhou Institute \& Tengen Intelligence Institute, China}}

\bigskip
\bigskip

\begin{abstract}

In this paper, we study online paging with delay or deadlines. In online paging with delay, an online algorithm manages a cache of size $k$ and needs to serve incoming requests by adding the requested pages to the cache.
    In addition, it has the option to delay serving incoming requests at a cost that depends on the delay.
    We propose a new online algorithm, which combines a threshold rule with first-in, first-out ($\FIFO$) replacement, and show that it is $(2k+2)$-competitive for cache size $k$.
    This result corrects a previously published claim that the competitive ratio is exactly $2k+1$ regardless of the number of holes (pages outside the cache).
    However, in the case of one hole, we show that the \FIFO rule with a different threshold is indeed $(2k+1)$-competitive, giving a tight result.

    In the case of deadlines, the same analysis gives a deadline-triggered algorithm that is $(k+1)$-competitive in general and $k$-competitive if there is only one hole.
    Moreover, we show that the second hole makes paging with deadlines strictly harder than classical paging, where the competitive ratio is independent of the number of holes.
    For paging with deadlines, we prove a deterministic lower bound of $k+\tfrac12$, already for two holes.
    We show that, as a consequence, no threshold-triggered algorithm can achieve a competitive ratio less than $2k+\frac32$ in general.

\end{abstract}

\section{Introduction}

The paging problem is a fundamental problem in online computation, and it is one of the first problems studied in the competitive analysis framework~\cite{sleatorTarjanAmortizedEfficiency}.
In paging, an algorithm maintains a cache of $k$ pages from a universe of $\numPages > k$ pages.
Upon requests to pages outside the cache---the \emph{holes}---the algorithm must load them and evict others.
The problem was first introduced in its offline variant by Belady~\cite{Belady}, who was studying virtual-storage computers, which allowed algorithms to access larger amounts of data than could be stored in the physical memory of the computer.
Today, the paging problem is often presented as a way to maintain a small fast local memory---the cache---by pulling and pushing data to and from a slow large memory.
The requests are issued by the CPU directly to the cache, and one often cannot predict which page will be requested in the future.
Paging has served as the starting point for weighted~\cite{weightedPaging} and learning-augmented~\cite{MLCachingLykouris, MLCachingRohatgi} variants, and for broader models such as the $k$-server problem~\cite{CompAlgoOnlProblems} and metrical task systems~\cite{metricalTaskSystem}.

Classical paging assumes that every request needs to be served immediately.
In practice, however, many requests can wait: background jobs, prefetches, or batch queries can tolerate some latency.
As a result, requests can be batched, reducing the overall cost.
This is captured by the paging with delay problem, first introduced by Azar et al.~\cite{Azar21OSD} as a special case of online service with delay, who proved a competitive ratio of $\O(k)$.
Here, an algorithm decides not only \emph{which} page to evict but also \emph{when} a page is worth fetching.
A first exact analysis of this problem was presented in~\cite{krneticMelnykWangWattenhofer20}. The authors proved a lower bound of $2k+1$ for the competitive ratio of any algorithm and claimed to derive a tight upper bound for the problem.
However, the claimed upper bound may only hold when $k+1$ pages are considered (the case with one hole).
The example below shows that the $\LRU$ algorithm from~\cite{krneticMelnykWangWattenhofer20} fails to be $(2k+1)$-competitive on $k+2$ pages, so that the best remaining upper bound is the competitive ratio of $6k$ from \cite{Azar21OSD}.

\begin{example}
    Consider the $\LRU$ rule from~\cite{krneticMelnykWangWattenhofer20}: if the adversary requests a page that is not in the cache already, the algorithm waits until the cost of delaying the request reaches the threshold $\thresholdValue\coloneq\tfrac{k+1}k$, and then fetches it.
    To serve the page, the algorithm evicts the least recently used page from the cache.
    \Cref{fig:counterexample:teaser} presents a sample input sequence, where the competitive ratio of the algorithm is strictly larger than $2k+1$, disproving the claimed upper bound from~\cite{krneticMelnykWangWattenhofer20}.

    On this input sequence, $\LRU$ fetches a new page $5$ times, incurring delay $\delta$ each time.
    Its total cost therefore is $5\cdot(1+\thresholdValue)=12.5$. \OPT does not pay any delay cost and achieves a cost of exactly $2$.
    The competitive ratio of $\LRU$ on this input sequence is $6.25 > 5 = 2k+1$.
    Observe that the sequence can be continued by swapping the outer and the inner pages, and repeating the same request sequence. For general $k$, see \Cref{thm:fifo_lower_bound}.
 \begin{figure}[h]
        \begin{center}
           \begin{tikzpicture}[
    x=0.95cm, y=0.73cm,
    page/.style={black!15, line width=0.4pt},
    wait/.style={densely dashed, line width=0.5pt},
    rel/.style={circle, draw, fill=white, line width=0.7pt, inner sep=0pt, minimum size=5.5pt},
    crit/.style={cross out, draw, line width=0.7pt, inner sep=0pt, minimum size=6pt},
  ]
  \begin{scope}[xshift=2cm]
        \foreach \y in {0,1,2,3} {
        \draw[page] (0,\y) -- (7,\y);
        }

        \draw[alg] (0,2) -- (2,2) -- (2,3) -- (4,3) -- (4,2) -- (6,2) -- (6,0) -- (7,0);
        \draw[alg] (0,1) -- (3,1) -- (3,0) -- (5,0) -- (5,3) -- (7,3);

        \draw[opt] (0,2) -- (2.2,2) -- (2.2,0) -- (7,0);
        \draw[opt] (0,1) -- (1,1) -- (1,3) -- (7,3);

        \draw[wait] (1,3) node[release] {} -- (2,3) node[threshold]{};
        \draw[wait] (2.2,2) node[release] {} -- (4,2) node[threshold]{};
        \draw[wait] (2.2,0) node[release] {} -- (3,0) node[threshold]{};
        \draw[wait] (4.2,3) node[release] {} -- (5,3) node[threshold]{};
        \draw[wait] (5.2,0) node[release] {} -- (6,0) node[threshold]{};
  \end{scope}

  \begin{scope}[yshift=-0.9cm, every node/.append style={inner xsep=2pt}]
    \draw[alg, line width=0.8pt] (0.05,0) -- ++(0.55,0) node[anchor=west, black] (l1) {\LRU};
    \draw[opt, line width=0.8pt] ([xshift=1.4em]l1.east) -- ++(0.55,0)
      node[anchor=west, black] (l2) {\OPT};
    \node[rel] (m3) at ([xshift=1.4em+3pt]l2.east) {};
    \node[anchor=west] (l3) at ([xshift=2pt]m3.east) {request released};
    \node[crit] (m4) at ([xshift=1.4em+3pt]l3.east) {};
    \node[anchor=west] at ([xshift=2pt]m4.east) {delay reaches $\thresholdValue=\frac 3 2$};
  \end{scope}
\end{tikzpicture}
        \end{center}

        \caption{A request sequence where the $\LRU$ algorithm with delays has competitive ratio larger than $2k+1$ (illustrated for the case $k=2$, with $4$ pages). Each y-level represents a page, and the cache trajectories of \LRU and \OPT are shown in blue and orange, respectively.
    Each line represents a cache slot, and a line at the y-level of some page means that the corresponding cache slot holds the page associated with the y-level.}
        \label{fig:counterexample:teaser}
    \end{figure}
\end{example}

The results in this paper open new questions about the competitive ratio of the paging problem with delays, and suggest that there may be interesting new bounds to be discovered when two or more holes are considered, or through eviction rules other than the standard rules from classical paging.

\paragraph*{Our results}
We give a simple, nearly optimal algorithm for paging with delay, relate the remaining gap to paging with deadlines, and show that, for deadlines, two holes make paging strictly harder than classical paging.
\Cref{tab:results} summarizes all bounds.

For paging with delay, we show that a simple nonclairvoyant algorithm, which fetches a page once its accumulated delay reaches a fixed threshold $\thresholdValue$ and evicts pages in first-in, first-out order ($\thresholdValue$-$\FIFO$), is strictly $(2k+2)$-competitive.
\begin{restatable}{theorem}{thmMain}\label{thm:main}
    $1$-$\FIFO$ is strictly $(2k+2)$-competitive.
\end{restatable}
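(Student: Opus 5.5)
We charge \ALG's cost to \OPT's fetches and delay through a payment scheme. Assume unit fetch cost, delay functions that are nondecreasing per request, and \ALG's cost counted per fetch. Each time $1$-$\FIFO$ fetches a page $\paymentPage{i}$ at time $\paymentTime{i}$, the pending requests to that page, $\paymentRequests{i}$, have accumulated total delay exactly $\thresholdValue=1$ (ties are handled by continuity). So every fetch costs \ALG at most $1+1=2$; call this a \emph{payment}. Let $\paymentCount$ be the number of payments. Then $\ALG\le 2\paymentCount$, and it suffices to prove
\[
\paymentCount \le (k+1)\bigl(\offlineFetchCost+\offlineDelayCost\bigr)+\O(1),
\]
where the additive constant depends only on the initial cache $\initialCache$ and gives strictness. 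This splits the target ratio $2k+2$ as $2\cdot(k+1)$. It also matches the deadline result from the abstract, which is $(k+1)$-competitive: I expect the same charging argument to yield both.

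For each payment $i$, look at the window $\paymentWindow{i}$, which runs from the earliest arrival among $\paymentRequests{i}$ to $\paymentTime{i}$. Throughout this window \ALG lacks $\paymentPage{i}$, since otherwise the requests would have been served. There are two cases.
\begin{itemize}
\item If \OPT also lacks $\paymentPage{i}$ during all of $\paymentWindow{i}$, then \OPT either pays delay on $\paymentRequests{i}$ or fetches $\paymentPage{i}$ later. Interpolating, \OPT pays a fractional amount $\alpha_i$ of delay, with $\alpha_i\le 1$, and the remaining $1-\alpha_i$ is charged to the fetch that eventually serves those requests.
\item Otherwise \OPT holds $\paymentPage{i}$ somewhere in $\paymentWindow{i}$. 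Then the \ALG cache and the \OPT cache differ at that time. This is the situation the standard \FIFO phase argument controls: between consecutive \OPT fetches, \FIFO can make at most $k$ faults on pages that \OPT holds, because each fetched page stays in the cache for the next $k-1$ fetches. That gives $k$ payments per \OPT fetch, plus $1$ extra from the delay or interpolation term, for $k+1$ in total.
\end{itemize}

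To make the second case rigorous I would use a potential function $\potential$. It counts, for each page in \OPT's cache but not in \ALG's, a weight given by its \FIFO rank: $\potential=\sum_{p\in C_\OPT\setminus C_\ALG}(\text{weight depending on queue positions})$, in the style of the classical $k$-competitiveness proof for \FIFO. The goal is the amortized inequality
\[
\Delta\paymentCount+\Delta\potential\le (k+1)\bigl(\Delta\offlineFetchCost+\Delta\offlineDelayCost\bigr).
\]
It has to be checked at three kinds of events: \OPT fetches, \OPT delay accrual, and \ALG payments. At a payment where \OPT holds $\paymentPage{i}$, the potential should drop by at least $1$. At a payment where \OPT does not hold the page, the decrease is paid for by \OPT's delay on $\paymentRequests{i}$, which is at least $1$ in total whenever \OPT does not serve them before $\paymentTime{i}$.

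The main obstacle is the mixed case. Here \OPT serves part of $\paymentRequests{i}$ by a fetch in the middle of $\paymentWindow{i}$ and evicts $\paymentPage{i}$ again before $\paymentTime{i}$. The counterexample in \Cref{fig:counterexample:teaser} shows that with two holes \OPT can rotate pages this way, so that a single \OPT fetch is "seen" by several \ALG payments. I would handle this by assigning each \OPT fetch a budget $\offlineBudget=k+1$. One unit of the budget covers the payment whose window contains that fetch. The other $k$ units cover the \FIFO-induced evictions of pages that \OPT holds. The key step is to prove that each \OPT fetch lies in the window of at most one payment for the same page. This holds because windows for the same page are disjoint, since \ALG serves all pending requests at each payment.
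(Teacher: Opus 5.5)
\textbf{Verdict: there is a genuine gap.} Your reduction is the right one: $\ALG=2\paymentCount$, so it suffices to show $\paymentCount\le(k+1)(\offlineFetchCost+\offlineDelayCost)$ by charging each payment inside its own window, and windows of one page are disjoint. Your window choice also works, since $\paymentPage{i}\notin\cache{\ALG}$ throughout it. What is missing is the argument that actually carries the proof, and several of the tools you propose for it fail as stated.

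\textbf{The potential and the per-event check.} Your potential sums over $\cache{\OPT}\setminus\cache{\ALG}$, but those pages are not in the \FIFO queue and have no rank. The potential that works is $\potential=\sum_{q\in\cache{\ALG}\setminus\cache{\OPT}}\rank{q}$, with $\potentialAt{0}=0$. Write $m=|\cache{\ALG}\cap\cache{\OPT}|$. A payment then changes $\potential$ by exactly $m-k\le-1$ if $\paymentPage{i}\in\cache{\OPT}$, and by $m\le k$ otherwise. An \OPT fetch raises $\potential$ by at most $k$, and by at most $0$ if the evicted page lies outside $\cache{\ALG}$.

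With this potential, the event-by-event inequality you plan to check fails at a payment whose page \OPT does not hold: \OPT pays nothing at that instant, while $\potential$ may rise by $k$. The same failure occurs for any potential that depends only on the caches and the queue order. To see this, let \OPT keep a fixed cache on $k+1$ pages while \FIFO cycles through all $k+1$ pages. Every payment would then need a potential drop, yet the state recurs. So the credit must be collected earlier in the window, and the accounting has to be per window, not per event. Your heuristic that \FIFO makes at most $k$ faults on \OPT's pages between two \OPT fetches is also false once faults on other pages are interleaved; the same cycle shows it.

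\textbf{The missing idea: which \OPT fetch pays in your mixed case.} You charge the fetch that brings $\paymentPage{i}$ in. That fails for two reasons:
\begin{itemize}
\item That fetch need not lie in the window at all, because \OPT may already hold $\paymentPage{i}$ when the window starts.
\item Even when it does lie in the window, it can itself raise $\potential$ by up to $k$. Its $k+1$ units then cannot also cover the payment and the rise of up to $k$ at the payment.
\end{itemize}
The paper instead charges the \OPT fetch that \emph{evicts} $\paymentPage{i}$ during $\paymentWindow{i}$. Because $\paymentPage{i}\notin\cache{\ALG}$ at that moment, this fetch does not increase $\potential$. Its full $k+1$ therefore covers $1$ for the payment plus $k$ for the jump of $\potential$ at the payment. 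Each \OPT fetch evicts exactly one page, and windows of that page are disjoint, so each fetch is charged at most once. All remaining \OPT fetches are covered by $k\Delta\offlineFetchCost-\Delta\potential\ge0$.

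\textbf{Smaller points.}
\begin{itemize}
\item When \OPT lacks $\paymentPage{i}$ throughout the window, no interpolation is needed. Every request of $\paymentRequests{i}$ is still pending for \OPT at $\paymentTime{i}$, so monotonicity gives \OPT delay at least $1$ on them. Then $k+1$ times that delay pays for the payment and the rise of at most $k$.
\item An $\O(1)$ additive term does not give strictness. Strictness needs the additive constant to be exactly $0$, which here follows from $\potentialAt{0}=0$ and $\finalPotential\ge0$.
\end{itemize}
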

This is only an additive 1 above the lower bound of $2k+1$, which holds even for clairvoyant algorithms. On $k+1$ pages and with threshold $\frac{k+1}{k}$, the algorithm is strictly $(2k+1)$-competitive, recovering the claimed bound from~\cite{krneticMelnykWangWattenhofer20} through a new analysis.
Our analysis is tight: for every fixed threshold, $\thresholdValue$-$\FIFO$ is no better than $(2k+2)$-competitive on $k+2$ or more pages (\Cref{thm:fifo_lower_bound}).

An algorithm for paging with delay makes two decisions: when to fetch a page, which is a rent-or-buy decision, and which page to evict.
Deciding which page to evict is already reflected in the paging with deadlines problem.
For this problem, the same analysis as for \Cref{thm:main} gives an upper bound of $k+1$ in general, and a tight bound of $k$ for the case of one hole.
The latter coincides with the competitive ratio of classical paging, which does not depend on the number of holes.
For paging with deadlines, however, we prove that the problem is strictly harder with more than one hole.
\begin{restatable}{theorem}{thmDeadlinedLB}\label{thm:deadlines_lower_bound}
    Every deterministic clairvoyant online algorithm for paging with deadlines has competitive ratio at least $k+\tfrac12$ on $\numPages \geq k+2$ pages.
\end{restatable}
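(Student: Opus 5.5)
The plan is an adaptive adversary on exactly $k+2$ pages; additional pages are simply never requested. Because requests carry deadlines, the online algorithm must commit to a fetch at some moment inside each window. The adversary can keep a window open and watch what the algorithm does before releasing later requests. With one hole the algorithm's cache is determined up to a single page, so this gives the adversary nothing beyond the classical argument, which yields $k$. With two holes the algorithm is missing two pages at any time, and it must guess which of them will be needed jointly with which cached pages. I want to turn that guess into one extra forced fetch per two classical phases. The target is that over a \emph{double phase} the algorithm pays at least $2k+1$ while \OPT pays at most $2$, which gives ratio $k+\tfrac12$ after amortization. Since the adversary can repeat double phases indefinitely, additive constants are irrelevant, and the bound also holds against a non-strict competitive ratio.

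\textbf{Step 1: the classical skeleton.} In each phase the adversary repeatedly issues a request with a short (essentially zero-length) window to one of the two pages currently missing from the algorithm's cache. Each such request costs the algorithm one fetch. The phase is cut so that it contains requests to at most $k+1$ distinct pages, so \OPT fetches once per phase. This already gives ratio $k$; the point of the construction is to improve it by $\tfrac12$.

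\textbf{Step 2: the gadget at a phase boundary.} At the start of a double phase I would release two overlapping long-window requests: one to each of the algorithm's current holes $h_1,h_2$, with a common deadline $D$. The algorithm cannot hold both $h_1$ and $h_2$ unless it evicts two cached pages, so the adversary watches its choices until $D$.

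If the algorithm evicts two pages $x,y$ to bring $h_1,h_2$ in together, the adversary continues with short-window requests to $x$ and $y$. Otherwise the algorithm serves $h_1$ and $h_2$ at different times, sandwiching an eviction of $h_1$ in between. The adversary then immediately re-requests whichever page was just evicted.

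In every branch I would identify a set of $k+1$ pages that covers all requests from $D$ until the end of the double phase. \OPT will have loaded this set offline, using at most $2$ fetches per double phase: one at the gadget and one in the middle. Meanwhile I would count the algorithm's fetches as $k$ from the skeleton in each half, plus one wasted fetch from the gadget.

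\textbf{Step 3: amortization.} I would maintain an invariant on the relation between \OPT's cache and the algorithm's cache at the start of each double phase, for instance that they differ in at most one page. This lets the gadget be reapplied, and summing over double phases gives the ratio $(2k+1)/2=k+\tfrac12$.

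\textbf{Main obstacle.} The hard part is Step 2: making sure that \emph{every} response of a clairvoyant algorithm costs the extra fetch, while \OPT still pays only $2$ per double phase. The algorithm sees deadlines and may prefetch, or may serve one long-window request early to keep flexibility. I would handle this by a case analysis on the algorithm's cache at time $D$. Whatever two pages it is then missing, the adversary chooses the continuation so that \OPT's pre-chosen $(k+1)$-set contains both of them while \OPT's own hole is a page the algorithm has cached. I expect tuning the window lengths so that \OPT can always serve the gadget requests with its single gadget fetch to require the most care.
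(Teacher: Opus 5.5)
Your proposal has a genuine gap. Everything beyond the classical bound of $k$ sits in Steps~2 and~3, and there you state intentions (``I would identify'', ``for instance'', ``I expect'') rather than an argument. You never fix the adversary's continuation in each branch, never exhibit \OPT's schedule, and never show that the invariant is restored.

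The invariant you suggest also cannot be what drives the argument:
\begin{itemize}
\item ``Differ in at most one page'' allows identical caches, and then \OPT also pays $2$ for the gadget. You need \OPT to hold \emph{exactly} one of $h_1,h_2$.
\item More seriously, fix any single offline schedule and send every request to a hole of \ALG. Both caches hold $k$ of the $k+2$ pages, so a request that \ALG must pay for but \OPT gets for free exists exactly when the two caches differ. Hence the adversary runs out of offline-free requests precisely when the caches coincide, and a gadget released at such a moment costs \OPT two fetches.
\item \ALG can steer towards such coincidences through its evictions, for instance by evicting \OPT's other hole $w$ when it loads $h_1,h_2$. The rest of the phase is then classical paging on the remaining $k+1$ pages, which ends with identical caches.
\end{itemize}
So the extra online fetch has to come from \OPT timing its service of a long-window request differently from \ALG, and choosing its second hole in hindsight. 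A local case analysis at time $D$ with a pre-chosen $(k+1)$-set does not show that this can be arranged repeatedly.

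Step~1 is also misstated. A phase with at most $k+1$ distinct pages does not by itself cost \OPT only one fetch. The classical fact is that \OPT, evicting the page requested furthest in the future, faults at most once per $k$ requests confined to $k+1$ pages.

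The paper's proof supplies exactly this missing mechanism:
\begin{itemize}
\item Each phase has a single \emph{held} request $h$, released before the phase and expiring at its end, so \OPT may serve it in a different order than \ALG.
\item Instead of one offline cache, the adversary tracks a budget $B$ and the set $\mathcal{C}$ of pages $z$ for which some schedule of cost at most $B$ currently has holes $\{h,z\}$.
\item It also keeps fallback schedules of cost at most $B+1$ for every $z\neq h$, and, for every unstarred $z\in\mathcal{C}$, one that has already served $h$ (Invariants~1--3).
\item Every request goes to a page $p\neq h$ absent from \ALG's cache that maximizes $|\mathcal{C}\setminus\{p\}|$, so it removes at most one candidate. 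The budget grows only when no such request leaves a candidate.
\item A new phase starts with $k+1$ candidates exactly when the page ending the previous phase was unstarred. A phase starting with $k$ candidates has no star, so the next phase always gets $k+1$.
\end{itemize}
Hence every phase has at least $k$ requests and at least every other phase has $k+1$. This gives $N\ge\frac{2k+1}{2}B-k$ and the ratio $k+\frac12$. Your double-phase gadget would need an equivalent of this global bookkeeping over all cheap offline schedules, and the proposal contains none.
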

The reason is that, with at least two pages missing from the cache, the optimal offline algorithm can serve requests in a different order than the online algorithm does.

Lastly, we study threshold algorithms for paging with delay, which time a fetch based on a fixed threshold for the delay cost.
We show that this rule costs an additive~$k+1$ in the competitive ratio over the deadlines case: every lower bound $\rho$ for deadlines yields a lower bound of $\rho+k+1$ for threshold algorithms with delay (\Cref{prop:threshold-transfer}).
For threshold-triggered $\FIFO$, our analysis gives delay upper bounds exactly $k+1$ above the deadline upper bounds.

All proofs were verified in Lean.\footnote{Formalization: \url{https://github.com/licsth/paging-with-delay-formalization}}

\begin{table}[htbp]
    \centering
    \small
    \setlength{\tabcolsep}{10pt}
    \newcommand{\bound}[2]{\begin{tabular}[t]{@{}c@{}}#1\\[-2pt]{\scriptsize\color{gray}#2}\end{tabular}}
    \begin{tabular}{@{}lllcc@{}}
        \toprule
        Problem   & Pages               & Algorithms                & Lower bound                                                    & Upper bound                                       \\
        \midrule
        Delay     & $\numPages=k+1$     & any                       & \bound{$2k+1$}{Krneti\'c et al.~\cite{krneticMelnykWangWattenhofer20}}                                       & \bound{$\bm{2k+1}$}{\Cref{thm:fifo_k_plus_one}}        \\
        \cmidrule(l){2-5}
                  & $\numPages\geq k+2$ & any                       & \bound{$2k+1$}{Krneti\'c et al.~\cite{krneticMelnykWangWattenhofer20}}                                       & \bound{$\bm{2k+2}$}{\Cref{thm:main}}                   \\
        \addlinespace
                  &                     & threshold                 & \bound{$\bm{2k+\frac32}$}{\Cref{cor:threshold-lower}}              & \bound{$\bm{2k+2}$}{\Cref{thm:main}}                   \\
        \addlinespace
                  &                     & $\thresholdValue$-$\FIFO$ & \bound{$\bm{2k+2}$}{\Cref{thm:fifo_lower_bound}}                    & \bound{$\bm{2k+2}$}{\Cref{thm:main}}                   \\
        \midrule
        Deadlines & $\numPages=k+1$     & any                       & \bound{$k$}{Sleator and Tarjan~\cite{sleatorTarjanAmortizedEfficiency}} & \bound{$\bm k$}{\Cref{thm:fifo_deadlines_k_plus_one}} \\
        \cmidrule(l){2-5}
                  & $\numPages\geq k+2$ & any                       & \bound{$\bm{k+\frac12}$}{\Cref{thm:deadlines_lower_bound}}         & \bound{$\bm{k+1}$}{\Cref{thm:fifo_deadlines}}          \\
        \bottomrule
    \end{tabular}
    \caption{Deterministic bounds for the competitive ratios of paging with delay and with deadlines.
    Results from this paper are in bold.
        Lower bounds hold for clairvoyant algorithms and non-strict competitive guarantees; upper bounds are strict and attained by nonclairvoyant algorithms.
        The row for threshold algorithms concerns the best algorithm within that class.
        The $\thresholdValue$-$\FIFO$ lower bound holds for every threshold $\thresholdValue>0$, and its upper bound uses $\thresholdValue=1$.}
    \label{tab:results}
\end{table}

\paragraph*{Related work}

Online problems with delay were first considered in the definition of the online TCP acknowledgment problem, where sending acknowledgments of messages can be delayed and aggregated to reduce the number of sent messages, while the delay results in latency costs~\cite{onlineTCPjournal, Dooly1998TCPDA}.
The idea of explicitly adding delays to online problems was proposed by Emek et al.~\cite{onlineMatching}, who presented the problem of min-cost perfect matching with delays, showing that delaying a decision may help to match nodes at a lower cost.

The paging problem with delay was introduced by Azar et al.\ in~\cite{Azar21OSD} as a special case of online service with delay problems.
The authors present an $\O(kd^4)$-competitive clairvoyant algorithm on hierarchically separated trees (HSTs) of depth $d$.
By using a randomized embedding of general metrics into HSTs, they also present an $\O(k\log^5 n)$-competitive algorithm on general metrics.
On star metrics, which correspond to weighted paging with delay, their algorithm is $\O(k)$-competitive.
Krneti\'c et al.~\cite{krneticMelnykWangWattenhofer20} prove a $2k+1$ clairvoyant lower bound and claim a $2k+1$ nonclairvoyant upper bound. In this work, we show that the latter upper bound only holds when one hole is allowed.

For a single server in general $n$-point metric spaces, Touitou~\cite{Touitou23ImprovedDeterministic} presented an upper bound of $\O(\log n)$ on the competitive ratio of clairvoyant online service with delay.
In the nonclairvoyant case, Touitou~\cite{Touitou25NonclairvoyantOSD} later presented an upper bound of $\O(\sqrt n \log n)$.
In the special case of uniform line metrics, Bienkowski et al.~\cite{Bienkowski18OSDLine} showed an upper bound of $\O(\log n)$ without clairvoyance.

Online service with deadlines contains the $k$-server problem as a special case, where each request must be served immediately.
Thus, the clairvoyant lower bound of $k$ on the competitive ratio~\cite{sleatorTarjanAmortizedEfficiency} carries over to the paging problem with deadlines, already on $k+1$ pages.

Randomized versions of these problems have also been considered in the literature.
Gupta et al.~\cite{Gupta22CachingWithTimeWindows} give a clairvoyant upper bound of $\O(\log n \log k)$ on the weighted paging problem with delay, leaving a gap of a factor $\O(\log n)$ to the $\Omega(\log k)$ lower bound from classical paging~\cite{Fiat91CompetitivePaging}.
For classical (weighted) paging, the bound $\Theta(\log k)$ is tight~\cite{weightedPaging}.
For arbitrary metrics, Gupta et al.~\cite{Gupta22HittingSet} present a clairvoyant $\O(\mathrm{poly}\log (n,\Delta))$-competitive algorithm, where $\Delta$ is the aspect ratio of the metric space, i.e., the ratio of the largest to the smallest inter-point distance.

\section{Model and notation}\label{sec:model}

Let $\pageUniverse$ denote the set of all pages, where $|\pageUniverse| = \numPages$.
An instance $\sigma$ of \emph{paging with delay} is a sequence of requests $r=(\requestPage{r}, \arrival{r}, \delay{r})$, where $\requestPage{r} \in \pageUniverse$ is the requested page, $\arrival{r}\geq 0$ is the arrival time of the request, and $\delay{r}: [0,\infty) \to [0,\infty)$ is its delay cost function.
We require that $\delay{r}$ is continuous, nondecreasing and unbounded, and has $\delay{r}(0)=0$.
An algorithm holds a cache with up to $k$ pages at any given time and must, for each request~$r$, have $\requestPage{r}$ cached at some time in $[\arrival{r}, \infty)$. To this end, the algorithm may alter its cache instantaneously at any time, and may go through multiple cache configurations at one instant of time.
We say that a request is \emph{served} at the first time $\serviceTime{r}\geq \arrival{r}$ when its page is in the algorithm's cache.

An algorithm is \emph{online} if its actions at time $t$ depend only on the requests with arrival time at most $t$.
If its actions additionally do not depend on the trajectory of the delay cost curves after time $t$, we say that the algorithm is \emph{nonclairvoyant}.
Otherwise, the algorithm is \emph{clairvoyant}.
Further, we call an algorithm \emph{offline} if it has full knowledge of the request sequence $\sigma$ before the start of the algorithm.

We assume that the online and offline algorithms both start with the same initial cache~$\initialCache \subseteq \pageUniverse$ of size $k$. If a page corresponding to a request is already in the cache of the algorithm (cache hit), this page can be served without incurring costs. Fetching a page that was not in the cache into the cache (upon a cache miss) costs the algorithm one unit of cost, while evicting a page is free.
We write $\cache{\ALG}$ and $\cache{\OPT}$ for the current caches of the online algorithm and of the offline optimum, respectively.
The pages outside a full cache are called its \emph{holes}.
When a request $r$ is served at time $\serviceTime{r}$, the algorithm additionally pays $\delay{r}(\serviceTime{r}-\arrival{r})$ in delay cost.

An algorithm is \emph{$\rho$-competitive} if, for every input sequence $\sigma$, its total cost
\[
\ALG(\sigma)
    \coloneq
    \text{number of page fetches}
    +
    \sum_r \delay{r}(\serviceTime{r}-\arrival{r})
\]
satisfies $\ALG(\sigma)\leq \rho\cdot \OPT(\sigma) + c$, where $\OPT(\sigma)$ is the minimum cost of any (offline) algorithm on $\sigma$, and $c$ is a constant that is independent of the input sequence.
We say that the algorithm is \emph{strictly competitive} if the additive constant $c$ is $0$.
We denote \OPT's total number of fetches by $\offlineFetchCost$, and its total delay cost by $\offlineDelayCost$, so that $\OPT(\sigma)=\offlineFetchCost+\offlineDelayCost$.

\paragraph*{Threshold algorithms}
\label{sec:threshold-definition}

For a page $v$ outside the cache, the accumulated delay of its pending requests at time $t$ is
\[
    \delay{v}(t)\coloneq \sum_{\substack{r:\,\requestPage{r}=v\\r\text{ is pending at time }t}}\delay{r}(t-\arrival{r}).
\]
\begin{definition}\label{def:threshold}
    A deterministic algorithm for paging with delay is a \emph{threshold algorithm} with fixed threshold $\thresholdValue>0$ if it fetches a page $v$ exactly when $\delay{v}(t)$ reaches $\thresholdValue$ for the first time.
    It serves all pending requests for the fetched page; its replacement rule specifies which page to evict when the cache is full.
\end{definition}
We exclude the case $\thresholdValue=0$ from the definition, since it gives an unbounded competitive ratio.

We call each threshold-triggered fetch, together with the service of its pending requests, a \emph{payment}.
We write $\paymentCount$ for the number of payments, $\paymentPage{i}$ for the page fetched at payment~$i$,~$\paymentTime{i}$ for the time of the payment, and $\paymentRequests{i}$ for the requests served with that payment.
These symbols refer to the particular execution and threshold under discussion.

\paragraph*{Paging with deadlines}

In paging with deadlines, each request has a fixed deadline $\deadline{r}$, requiring that the request is served at $\serviceTime{r}\leq \deadline{r}$.
Before the deadline, the request incurs no delay, so that an algorithm's total cost is only the number of page fetches.

Paging with deadlines can be modeled as a special case of paging with delay, where the delay cost functions are zero until the deadline and ``jump'' (using a ramp) to a very large value immediately after.

In paging with deadlines, an algorithm is nonclairvoyant if its actions at time $t$ do not depend on deadlines that expire after $t$.

\paragraph*{Timing convention}
At each time, we use the convention that events happen in the following order:
First, new requests arrive and their information is revealed.
Then, the offline optimum performs its cache updates.
Afterwards, the online algorithm moves, serving all requests on visited pages.
Finally, deadlines expire (in the deadlines case), so that requests can be served exactly at the time of their deadline.

\section{FIFO upper bounds for paging with delay}\label{sec:fifo}
\subsection{The \texorpdfstring{$\thresholdValue$}{δ}-\FIFO algorithm}

We combine the threshold rule with threshold $\thresholdValue$ and the standard \emph{first-in, first-out} ($\FIFO$) cache replacement strategy.
The algorithm keeps track of the pages in the cache by ordering them in a queue by the time they were fetched into the cache. The initial $\FIFO$ queue order is arbitrary but fixed.
A cache hit does not change the ordering in the queue.
Upon a cache miss, the page at the front of the queue---the page that entered the cache earliest---is evicted, and the newly fetched page is inserted at the back of the queue.
For $\thresholdValue>0$, continuity and unboundedness of the delay functions ensure that pending requests eventually trigger a payment at exactly the threshold.
The $\thresholdValue$-\FIFO algorithm is nonclairvoyant for every choice of $\thresholdValue$.

At each payment, the pending requests being served have accumulated exactly $\thresholdValue$ units of delay cost.
Together with the unit fetch cost, every payment therefore costs $1+\thresholdValue$.

\begin{observation}\label{obs:alg-cost}
    The total cost of $\thresholdValue$-\FIFO is $(1+\thresholdValue)\paymentCount$.
\end{observation}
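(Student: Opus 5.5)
The claim is that $\thresholdValue$-\FIFO pays exactly $(1+\thresholdValue)\paymentCount$. I will prove it by splitting the algorithm's cost into fetch cost and delay cost and showing that every unit of each is attributed to exactly one payment.

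\textbf{Fetch cost.} By \Cref{def:threshold}, $\thresholdValue$-\FIFO only fetches a page when that page's accumulated delay reaches $\thresholdValue$, that is, at a payment. The \FIFO rule only specifies which page to evict, and evictions are free. Each payment performs exactly one fetch, namely of $\paymentPage{i}$, which is outside the cache at time $\paymentTime{i}$. Hence the number of fetches equals $\paymentCount$.

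\textbf{Delay cost.} Every request $r$ is served at some time $\serviceTime{r}$. There are two cases.
\begin{enumerate}
\item If $\requestPage{r}$ is in the cache when $r$ arrives, then $\serviceTime{r}=\arrival{r}$. Since $\delay{r}(0)=0$, the request contributes no delay cost.
\item Otherwise $r$ stays pending until its page is fetched. By continuity and unboundedness of the delay functions, this happens at a payment $i$ with $\paymentPage{i}=\requestPage{r}$, and then $r\in\paymentRequests{i}$.
\end{enumerate}
The sets $\paymentRequests{i}$ are disjoint, since a served request is no longer pending. Therefore
\[
\sum_r \delay{r}(\serviceTime{r}-\arrival{r}) \;=\; \sum_{i=1}^{\paymentCount}\ \sum_{r\in\paymentRequests{i}} \delay{r}(\paymentTime{i}-\arrival{r}) \;=\; \sum_{i=1}^{\paymentCount} \delay{\paymentPage{i}}(\paymentTime{i}).
\]
By the threshold rule, each term $\delay{\paymentPage{i}}(\paymentTime{i})$ equals exactly $\thresholdValue$. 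This is because $t\mapsto\delay{v}(t)$ is continuous and nondecreasing between arrivals of requests to $v$, so the first time it reaches $\thresholdValue$ it equals $\thresholdValue$.

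Adding the two parts gives a total cost of $\paymentCount+\thresholdValue\paymentCount=(1+\thresholdValue)\paymentCount$.

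\textbf{Main subtlety.} The hard step is justifying that the accumulated delay equals $\thresholdValue$ exactly and does not overshoot. An overshoot could only come from a jump in $\delay{v}$. Jumps do not occur: a new arrival contributes $\delay{r}(0)=0$, and each existing term is continuous. There is one more edge case: a page that was just evicted and then requested again. Its accumulated delay restarts from only its newly pending requests, so the same argument applies.
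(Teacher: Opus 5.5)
Your proof is correct and follows the paper's argument: every fetch of $\delta$-\FIFO is a payment, cache hits contribute no delay, and each payment serves pending requests whose accumulated delay is exactly $\delta$, so each payment costs $1+\delta$. You spell out in more detail the disjointness of the sets $R_i$ and the continuity argument for why the threshold is met exactly rather than overshot, which the paper states only briefly before the observation.
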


\subsection{Upper bounds for the competitive ratio of \FIFO}\label{sec:fifo-upper-bounds}

In this section, we prove two upper bounds. The first bound is $2k+2$ for paging with delay for any $\numPages>k$. The second is $2k+1$ for the same problem under the assumption~$\numPages=k+1$, showing that the algorithm is optimal in this special case.
We prove these bounds by dividing the time into payment windows. We then use a potential function to upper bound the competitive ratio for three possible cases.
We keep the threshold $\thresholdValue>0$ arbitrary until choosing it in the two competitive-ratio theorems.
The improvement on $k+1$ pages comes from a guaranteed potential decrease at certain offline fetches.

We may assume that $\OPT$ evicts a page only when fetching another page into a full cache: retaining a page until its slot is needed cannot increase either fetch cost or delay cost.
Thus, both caches remain full.

\paragraph*{Payment windows}

We compare the online and offline algorithms by dividing the time into payment windows. A \emph{payment window} is defined with respect to the online algorithm. A payment window $\paymentWindow{i}$ ends at payment time $\paymentTime{i}$.
It starts at time $0$ (included) if $\paymentPage{i}$ has never been evicted; otherwise, it starts immediately after the last time $\FIFO$ evicted $\paymentPage{i}$.
In the first case, $\paymentPage{i}\notin \initialCache$, i.e., a page initially in the cache must be evicted before it can trigger a payment.
The window includes payment $i$ and excludes the eviction at its start.

\begin{lemma}[Properties of payment windows]\label{lem:windows}
    For every payment $i$, the following properties hold:
    \begin{enumerate}
        \item Page $\paymentPage{i}$ is outside the cache of the online algorithm throughout $\paymentWindow{i}$ up to its final fetch, and every request in $\paymentRequests{i}$ arrives during $\paymentWindow{i}$.
        \item The requests served at the payment satisfy
              \[
                  \sum_{r\in \paymentRequests{i}}\delay{r}(\paymentTime{i}-\arrival{r})=\thresholdValue.
              \]
        \item For any fixed page, its payment windows are pairwise disjoint.
    \end{enumerate}
\end{lemma}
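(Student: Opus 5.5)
The three properties follow almost directly from the definitions of the threshold rule (Definition~\ref{def:threshold}) and of payment windows, so the plan is to unfold the definitions carefully and argue each item in turn.

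\emph{Property~1.} Let $e$ be the start of $\paymentWindow{i}$: either time $0$, or the last eviction of $\paymentPage{i}$ by $\FIFO$ before $\paymentTime{i}$. In the first case $\paymentPage{i}$ has never been evicted before $\paymentTime{i}$. A page in $\initialCache$ that is never evicted is always cached, so it never accumulates pending delay and never triggers a payment. Hence $\paymentPage{i}\notin\initialCache$. Moreover, $\paymentPage{i}$ was never fetched before $\paymentTime{i}$, since any earlier fetch would require a later eviction before a new fetch could occur. In the second case, $\paymentPage{i}$ is evicted at $e$ and is not fetched again until $\paymentTime{i}$. Otherwise some fetch in $(e,\paymentTime{i})$ would have to be followed by another eviction before $\paymentTime{i}$, contradicting the choice of $e$ as the last eviction. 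In both cases $\paymentPage{i}$ is outside the cache throughout $\paymentWindow{i}$ up to the fetch at $\paymentTime{i}$.

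For the requests in $\paymentRequests{i}$, take $r\in\paymentRequests{i}$. It is pending at $\paymentTime{i}$, so $\arrival{r}\le\paymentTime{i}$. Suppose $\arrival{r}$ were at or before the eviction at $e$ (second case). The timing convention says arrivals precede the online move at the same instant. So $r$ would have been served at or before $e$, because its page was in the cache at that time (it is only evicted at $e$). That contradicts $r$ being pending at $\paymentTime{i}$. Hence $\arrival{r}$ lies in $\paymentWindow{i}$.

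\emph{Property~2.} The threshold algorithm fetches $\paymentPage{i}$ at the first time $\delay{\paymentPage{i}}(t)$ reaches $\thresholdValue$. We use that $t\mapsto \delay{\paymentPage{i}}(t)$ is continuous on intervals with no arrival and no service. New arrivals contribute $\delay{r}(0)=0$, so no jumps occur at arrivals either. By Property~1 there are no services of $\paymentPage{i}$ during the window before $\paymentTime{i}$. The pending requests at $\paymentTime{i}$ are exactly $\paymentRequests{i}$, and the sum is continuous and starts at value $0$ at the start of the window (or below $\thresholdValue$ by minimality). Therefore at the first hitting time it equals exactly $\thresholdValue$, which is the claimed identity.

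\emph{Property~3.} Consider two payments $i<j$ of the same page $v$. Since $v$ is in the cache just after $\paymentTime{i}$ and outside it at the start of $\paymentWindow{j}$, there must be an eviction of $v$ in $(\paymentTime{i},\paymentTime{j})$. The last such eviction starts $\paymentWindow{j}$, which therefore begins after $\paymentTime{i}$, the end of $\paymentWindow{i}$.

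The only delicate point is the boundary bookkeeping in Property~1 at the eviction instant. This needs the timing convention, together with the fact that the window excludes the eviction at its start. It also needs care about multiple cache configurations at one instant; I would handle this by treating an eviction and a refetch at the same instant as ordered events.
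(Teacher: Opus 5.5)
Your proposal is correct and follows essentially the same route as the paper. It uses the same three arguments: the page stays absent from its last eviction (or from initialization) until the payment, any earlier request was served while the page was cached, and between two payments of the same page a fetch followed by an eviction forces the later window to start after the earlier one ends. You additionally spell out details the paper leaves implicit: the arrival-before-move convention at the eviction instant, continuity giving exactly $\thresholdValue$ at the first hitting time, and reading Property~3 in event order, since with many simultaneous payments the eviction can occur at the instant $\paymentTime{i}$ itself, so it need not lie in the open interval you wrote.
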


\begin{proof}
    If $\paymentPage{i}$ has been evicted before payment $i$, it stays outside the online cache from its last eviction until that payment.
    Otherwise, it has been absent since initialization.
    Any earlier request for $\paymentPage{i}$ was served either on arrival while the page was cached or at its preceding fetch, so every request in $\paymentRequests{i}$ arrives in $\paymentWindow{i}$.
    The threshold rule gives the second assertion.
    Finally, between two payments for the same page, that page must first be fetched and then evicted again.
    Thus, the later window starts strictly after the earlier window ends, in the fixed event order.
\end{proof}

On $k+1$ pages, $\paymentPage{i}$ is the unique page outside the online cache before payment $i$.
For~$i>1$, it was evicted at payment $i-1$, so its payment window runs from immediately after that payment through payment $i$; the first window starts at initialization.
On more than~$k+1$ pages, windows for different pages may overlap.

\paragraph*{The rank potential}
\label{sec:rank-potential}

We define the potential function using the ordering of the pages in the queue. We assign the pages in the cache of the online algorithm the ranks $1,\ldots,k$ according to the $\FIFO$ queue order. Here, rank $1$ corresponds to the page next in line to be evicted, and rank $k$ corresponds to the most recently fetched page.
Suppressing the time argument in the cache notation, define
\begin{equation}\label{eq:rank-potential}
    \potential=\sum_{q\in \cache{\ALG}\setminus \cache{\OPT}}\rank{q},
    \qquad
    0\leq\potential\leq \frac{k(k+1)}2.
\end{equation}
Up to an additive constant, this is analogous to the \FIFO potential in the paging problem without delays used in~\cite{feuerstein1998memory}.
The caches of $\thresholdValue$-\FIFO and the offline algorithm initially equal $\initialCache$, so $\potentialAt{0}=0$.
For the potential value $\finalPotential$ after both schedules have served every request, it holds that $\finalPotential\geq\potentialAt{0}=0$.

\begin{lemma}[Potential changes]\label{lem:potential-changes}
Throughout the execution, the potential changes as follows:
    \begin{enumerate}[(i)]
        \item\label{eq:potential-off} At an offline fetch, $\Delta\potential\leq k$.
        \item\label{eq:potential-off-hole} At an offline fetch, if the evicted page is outside the online cache, then $\Delta\potential\leq0$; on~$\numPages=k+1$ pages, the stronger bound $\Delta\potential\leq-1$ holds.
        \item\label{eq:potential-online} At online payment $i$, $\Delta\potential\leq-1$ if $\paymentPage{i}\in\cache{\OPT}$ immediately before the payment, and $\Delta\potential\leq k$ otherwise.
    \end{enumerate}
\end{lemma}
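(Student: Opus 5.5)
We will verify each of the three items by a direct case analysis on how a single cache update changes the two ingredients of $\potential$: the set $\cache{\ALG}\setminus\cache{\OPT}$ and the ranks of the pages in the \FIFO queue. Offline fetches change only the set; online payments change both the ranks and the set. The bound $0\le\potential\le k(k+1)/2$ and the fact that ranks lie in $\{1,\dots,k\}$ are all we need.

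\emph{Offline fetch.} Suppose \OPT fetches a page $p$ and evicts a page $e$. The online queue is unchanged, so ranks are fixed. The membership of $p$ in $\cache{\ALG}\setminus\cache{\OPT}$ can only be removed, which does not increase $\potential$. The only possible increase comes from $e$ entering the set, which happens exactly when $e\in\cache{\ALG}$, and then contributes $\rank{e}\le k$. This gives (i).

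For (ii), if $e\notin\cache{\ALG}$, nothing enters the set, so $\Delta\potential\le 0$. On $\numPages=k+1$ pages, there is exactly one page outside $\cache{\ALG}$, and it must be $e$. Since $p\ne e$ and $p\notin\cache{\OPT}$ beforehand, we have $p\in\cache{\ALG}\setminus\cache{\OPT}$ before the fetch. So $p$ leaves the set and decreases $\potential$ by $\rank{p}\ge1$.

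\emph{Online payment $i$.} \FIFO evicts the rank-$1$ page $u$, lowers every other rank by one, and inserts $\paymentPage{i}$ with rank $k$. Write $S=\cache{\ALG}\setminus\cache{\OPT}$ before the payment.

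\begin{itemize}
\item Pages of $S\setminus\{u\}$ each lose one unit of rank.
\item If $u\in S$, it leaves the set, losing its rank $1$.
\item The new page contributes $k$ exactly when $\paymentPage{i}\notin\cache{\OPT}$.
\end{itemize}

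Hence $\Delta\potential=-|S|+k\cdot[\paymentPage{i}\notin\cache{\OPT}]$, using that every element of $S$ loses exactly one unit, including $u$ losing $1$.

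\begin{itemize}
\item If $\paymentPage{i}\in\cache{\OPT}$, we need $|S|\ge1$. Both caches are full and $\paymentPage{i}\in\cache{\OPT}\setminus\cache{\ALG}$, so $|\cache{\ALG}\setminus\cache{\OPT}|=|\cache{\OPT}\setminus\cache{\ALG}|\ge1$. This gives $\Delta\potential\le-1$.
\item Otherwise, $\Delta\potential=k-|S|\le k$.
\end{itemize}

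The step needing the most care is the equal-size argument. It relies on the standing assumption that \OPT evicts only when fetching into a full cache, so both caches keep exactly $k$ pages. We must also respect the timing convention: \OPT moves before the online algorithm at the same instant, so "immediately before the payment" refers to \OPT's updated cache.
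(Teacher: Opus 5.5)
Your proof is correct and follows the paper's argument essentially step for step: the add/remove bookkeeping at offline fetches is the same, and at a payment your identity $\Delta\potential=-|S|+k\,\mathds{1}_{\{\paymentPage{i}\notin\cache{\OPT}\}}$ is the paper's $\Delta\potential=m-k\,\mathds{1}_{\{\paymentPage{i}\in\cache{\OPT}\}}$ with $m=k-|S|$. Your equal-size argument for $|S|\geq 1$ is the paper's observation $m\leq k-1$; it actually only needs $|\cache{\OPT}|\leq k=|\cache{\ALG}|$, not the full-cache assumption for \OPT.
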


\begin{proof}
    We prove each item separately.
    An offline fetch adds the rank of the evicted page to the potential if that page is in the online cache, and removes the rank of the fetched page if that page is in the online cache.
    The added contribution is at most $k$, and the removed contribution is nonnegative; thus (\ref{eq:potential-off}) holds.

    To prove (\ref{eq:potential-off-hole}), note that if the evicted page is outside the online cache, no contribution is added.
    On $k+1$ pages, it is then the unique page outside the online cache, so the fetched page $q$ belongs to the online cache, and its contribution $\rank{q}\geq1$ is removed.
    With more than $k+1$ pages, both the evicted and fetched pages may be outside the online cache, in which case the potential change is zero.

    Finally, for (\ref{eq:potential-online}), for online payment $i$, write $m=|\cache{\ALG}\cap\cache{\OPT}|$ immediately before the payment.
    Each of the $k-m$ pages in the online cache but outside the offline cache loses one unit of rank contribution: the page of rank $1$ is evicted, and the ranks of the other pages decrease by one.
    The newly fetched page contributes $k$ precisely when it is outside the offline cache.
    Hence,
    \begin{equation}\label{eq:general-fifo-potential}
        \Delta\potential=m-k\,\mathds{1}_{\{\paymentPage{i}\in \cache{\OPT}\}}.
    \end{equation}
    If $\paymentPage{i}\in\cache{\OPT}$, then $m\leq k-1$, since the offline cache also contains $\paymentPage{i}$, which was absent from the online cache.
    Thus, the potential change is $m-k\leq-1$; otherwise, it is $m\leq k$.
\end{proof}

\paragraph*{Charging the payments}

The next lemma separates the charges to offline fetches and offline delay.
Its two bounds differ only in the coefficient of the fetch cost; choosing the threshold will balance the two coefficients.

\begin{figure}[tbp]
    \centering
    \begin{tikzpicture}[x=1.05cm,y=1cm,font=\small]
        \foreach \l/\r in {0/2,4/6,8/10} {
                \fill[blue!7] (\l,-0.4) rectangle (\r,2.55);
                \draw[black!25,densely dotted] (\l,-0.4) -- (\l,2.55);
                \draw[black!25,densely dotted] (\r,-0.4) -- (\r,2.55);
            }
        \foreach \y/\p in {2/a,1/b,0/c} {
                \draw[black!15] (0,\y) -- (11,\y);
                \node[left] at (-0.1,\y) {$\p$};
            }
        \node[above, align=center] at (1,3.25) {Case 1:\\\scriptsize$a \in \cache{\OPT}$ at payment};
        \node[above,align=center] at (5,3.25) {Case 2: \\\scriptsize \OPT evicts $a$ during $\paymentWindow{i}$};
        \node[above,align=center] at (9,3.25) {Case 3:\\\scriptsize \OPT misses $a$ throughout $\paymentWindow{i}$};

        \draw[alg] (0,1) -- (2,1) -- (2,2) -- (4,2) -- (4,0)
        -- (7,0) -- (7,1) -- (10,1) -- (10,2) -- (11,2);
        \draw[alg] (0,0) -- (3,0) -- (3,1) -- (6,1) -- (6,2)
        -- (8,2) -- (8,0) -- (11,0);
        \draw[opt] (0,1) -- (1,1) -- (1,2) -- (5,2) -- (5,1)
        -- (10.6,1) -- (11,1);
        \draw[opt] (0,0) -- (11,0);

        \foreach \a/\t/\y in {0.4/2/2,2.4/3/1,3.4/4/0,
                4.4/6/2,6.4/7/1,7.4/8/0,8.4/10/2} {
                \draw[dashed] (\a,\y) node[release] {} -- (\t,\y) node[threshold] {};
            }

        \node[above] at (1,2.7) {\scriptsize$k\Delta \offlineFetchCost-\Delta \potential \geq 0$};
        \draw (1,2) -- (1,2.7);
        \node[above] at (2,2.3) {\scriptsize$\Delta \potential \leq -1$};
        \draw (2,2) -- (2,2.3);
        \node[above] at (5,2.7) {\scriptsize$k\Delta \offlineFetchCost-\Delta \potential \geq k+1$};
        \draw (5,2) -- (5,2.7);
        \node[above] at (6,2.3) {\scriptsize$\Delta \potential \leq k$};
        \draw (6,2) -- (6,2.3);

        \draw (10,2) -- (10,2.25) -- (10.3,2.25) -- (10.3,2.7) node[above] {\scriptsize$\Delta \potential \leq k$};

        \draw[decorate,decoration={brace,amplitude=5pt}] (8.1,2.1) -- (9.9,2.1)
        node[midway,above,yshift=3pt] {\scriptsize$\frac{k+1}{\thresholdValue}\Delta \offlineDelayCost \geq k+1$};

        \draw[alg, yshift=-2pt] (0,-1) -- (0.55,-1) node[right,text=black] {\FIFO};
        \draw[opt, yshift=2pt] (2,-1) -- (2.55,-1) node[right,text=black] {\OPT};
        \node[release] at (4.25,-1) {};
        \node[right] at (4.45,-1) {request released};
        \node[threshold] at (7.6,-1) {};
        \node[right] at (7.8,-1) {delay reaches $\thresholdValue$};
    \end{tikzpicture}
    \caption{The three charging cases for page $a$ with $k=2$ and $\numPages=3$.
        Each horizontal level represents a page, a colored line is a cache slot, and a vertical step $x \to y$ means evicting $x$ and fetching $y$.
        Blue lines trace the pages in the cache of \FIFO; orange lines show an offline comparator \OPT.
        The shaded windows are payment windows $\paymentWindow{i}$ for page $a$.
        During each window, the quantity~$k\offlineFetchCost+\frac{k+1}{\thresholdValue}\offlineDelayCost-\potential$ increases by at least $1$, which accounts for the online payment. This is the bound for $\numPages=k+1$ pages.}
    \label{fig:payment-cases}
\end{figure}

\begin{lemma}[Payment accounting]\label{lem:payment-accounting}
    For every threshold $\thresholdValue>0$, the number of payments of~$\thresholdValue$-$\FIFO$ satisfies
    \begin{equation}\label{eq:general-payment-accounting}
        \paymentCount+\finalPotential-\potentialAt{0}\leq(k+1)\offlineFetchCost+\frac{k+1}{\thresholdValue}\offlineDelayCost.
    \end{equation}
    On $\numPages=k+1$ pages, the stronger bound
    \begin{equation}\label{eq:k-plus-one-payment-accounting}
        \paymentCount+\finalPotential-\potentialAt{0}\leq k\offlineFetchCost+\frac{k+1}{\thresholdValue}\offlineDelayCost
    \end{equation}
    holds.
\end{lemma}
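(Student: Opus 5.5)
The plan is an amortized argument on
\[
\Psi \coloneq (k+1)\offlineFetchCost+\frac{k+1}{\thresholdValue}\offlineDelayCost-\potential,
\]
with $\offlineFetchCost$ and $\offlineDelayCost$ read as the costs \OPT has incurred so far. I will show that $\Psi$ never decreases at any event except possibly at an online payment. I will then show that, summed over all events, its total increase is at least $\paymentCount$. Since $\Psi$ starts at $-\potentialAt{0}$ and ends at $(k+1)\offlineFetchCost+\frac{k+1}{\thresholdValue}\offlineDelayCost-\finalPotential$, this gives \eqref{eq:general-payment-accounting}.

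The events are offline fetches, online payments, and accumulation of offline delay. Offline delay only increases $\Psi$. At an offline fetch, Lemma~\ref{lem:potential-changes}(\ref{eq:potential-off}) gives $\Delta\Psi\geq (k+1)-k=1\geq 0$. If the evicted page lies outside the online cache, Lemma~\ref{lem:potential-changes}(\ref{eq:potential-off-hole}) gives $\Delta\Psi\geq k+1$. I attribute to each online payment $i$ the change $\Delta\Psi$ at the payment itself plus some earlier credit, and distinguish three cases, as in Figure~\ref{fig:payment-cases}.

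\emph{Case 1:} $\paymentPage{i}\in\cache{\OPT}$ immediately before the payment. Then Lemma~\ref{lem:potential-changes}(\ref{eq:potential-online}) gives $\Delta\Psi\geq 1$ at the payment, which covers it.

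Otherwise $\Delta\Psi\geq -k$ at the payment, and I need $k+1$ units of credit from within $\paymentWindow{i}$.

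\emph{Case 2:} \OPT evicts $\paymentPage{i}$ at some offline fetch during $\paymentWindow{i}$. By Lemma~\ref{lem:windows}(1), $\paymentPage{i}$ is outside the online cache at that moment. This still holds if the eviction happens at time $\paymentTime{i}$, since by the timing convention \OPT moves before the online algorithm. So this fetch contributes $\Delta\Psi\geq k+1$.

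\emph{Case 3:} $\paymentPage{i}\notin\cache{\OPT}$ throughout $\paymentWindow{i}$ up to and including the moment just before the payment. Every request in $\paymentRequests{i}$ arrives in $\paymentWindow{i}$ by Lemma~\ref{lem:windows}(1). Hence \OPT serves each of them at some time $\geq\paymentTime{i}$. By monotonicity of the delay functions and Lemma~\ref{lem:windows}(2), \OPT pays at least $\thresholdValue$ delay on $\paymentRequests{i}$, which is worth $\frac{k+1}{\thresholdValue}\cdot\thresholdValue=k+1$ in $\Psi$.

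The main point to check is that no credit is used twice. In Case 2, I charge the first offline eviction of $\paymentPage{i}$ within $\paymentWindow{i}$. Each offline fetch evicts exactly one page $p$, and by Lemma~\ref{lem:windows}(3) the windows of $p$ are disjoint. So that fetch is charged by at most one payment, and I use only its $k+1$ units; its generic nonnegative contribution is not counted again. In Case 3, the request sets $\paymentRequests{i}$ are disjoint across payments, because each request is served once, so the delay credits are disjoint as well. Summing, the total increase of $\Psi$ is at least $\paymentCount$, which proves the general bound.

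For $\numPages=k+1$, I will run the same argument with $\Psi'=k\offlineFetchCost+\frac{k+1}{\thresholdValue}\offlineDelayCost-\potential$. At an offline fetch, $\Delta\Psi'\geq k-k=0$. In Case 2, the stronger bound $\Delta\potential\leq -1$ from Lemma~\ref{lem:potential-changes}(\ref{eq:potential-off-hole}) gives $\Delta\Psi'\geq k+1$. Cases 1 and 3 are unchanged, and the disjointness argument carries over verbatim, which gives \eqref{eq:k-plus-one-payment-accounting}.
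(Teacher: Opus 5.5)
Your proposal is correct and is essentially the paper's proof. It uses the same three cases: a potential drop of at least $1$ at the payment when $\paymentPage{i}\in\cache{\OPT}$; an offline eviction of $\paymentPage{i}$ inside $\paymentWindow{i}$ while it is an online hole, with $\Delta\potential\leq 0$, or $\leq -1$ on $k+1$ pages; or offline delay of at least $\thresholdValue$ on $\paymentRequests{i}$, with all other offline fetches covered via Lemma~\ref{lem:potential-changes}(\ref{eq:potential-off}). Packaging this as one amortized quantity $\Psi$, charging only the first eviction per window, and getting disjointness from Lemma~\ref{lem:windows}(3) and from the disjoint sets $\paymentRequests{i}$ is just a slightly more explicit version of the paper's per-window summation.
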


\begin{proof}
    We prove the lemma statement by showing that it holds for each payment window. In particular, to prove~\eqref{eq:general-payment-accounting}, we prove that, during a payment window $\paymentWindow{i}$,
    \[
        1 \leq (k+1)\Delta_i \offlineFetchCost+\frac{k+1}{\thresholdValue}\Delta_i \offlineDelayCost-\Delta_i \potential
    \]
    holds.
    Here, $\Delta_i \offlineFetchCost$, $\Delta_i \offlineDelayCost$ and $\Delta_i \potential$ denote the changes in $\offlineFetchCost$, $\offlineDelayCost$ and $\potential$ during the payment window related to $\paymentPage{i}$, i.e., the number of offline evictions of $\paymentPage{i}$, the delay cost \OPT accumulates for pending requests on $\paymentPage{i}$, and the potential change at payment $i$ and offline evictions of $\paymentPage{i}$.
    We prove a similar bound for $\numPages=k+1$ pages by proving~\eqref{eq:k-plus-one-payment-accounting} for each payment window $\paymentWindow{i}$:
    \[
        1 \leq k\Delta_i \offlineFetchCost+\frac{k+1}{\thresholdValue}\Delta_i \offlineDelayCost-\Delta_i \potential\,.
    \]

    Lastly, we show that for every offline eviction (in particular, those not accounted for in any payment window), the offline cost covers the corresponding potential change, i.e.,
    \begin{equation}
        \label{eq:non_accounted_offline_fetches}
        0\leq k \Delta \offlineFetchCost-\Delta \potential\,.
    \end{equation}
    By Lemma~\ref{lem:windows}, payment windows for the same page are disjoint.
    Summing over all payment windows and pages and adding \eqref{eq:non_accounted_offline_fetches} for all remaining offline fetches accounts for all offline actions and online payments exactly once, giving the claimed bounds.

    To prove \eqref{eq:non_accounted_offline_fetches}, observe that $k\Delta \offlineFetchCost - \Delta \potential = k-\Delta \potential \geq 0$, since $\Delta \potential \leq k$ by Lemma~\ref{lem:potential-changes}(\ref{eq:potential-off}).
    For the remaining bounds, fix some payment $i$.
    We distinguish three cases, illustrated in Figure~\ref{fig:payment-cases}.

    \emph{Case 1: The requested page is cached by $\OPT$ at $\paymentTime{i}$.}
    At the online payment, Lemma~\ref{lem:potential-changes}(\ref{eq:potential-online}) gives $-\Delta\potential\geq 1$, so the potential decrease pays for this online payment.
    The potential changes at offline actions are covered by the offline cost using \eqref{eq:non_accounted_offline_fetches}.
    Summing this gives
    \[
        1 \leq k\Delta_i \offlineFetchCost-\Delta_i \potential\leq k\Delta_i \offlineFetchCost+\frac{k+1}{\thresholdValue}\Delta_i \offlineDelayCost-\Delta_i \potential\,.
    \]

    \emph{Case 2: The requested page was cached by $\OPT$ during the window, but is no longer cached.}
    Then $\OPT$ evicted $\paymentPage{i}$ during $\paymentWindow{i}$.
    At that eviction, $\paymentPage{i}$ was outside the online cache; hence, $\Delta \potential \leq 0$ in general, and $\Delta \potential \leq -1$ on $k+1$ pages by Lemma~\ref{lem:potential-changes}(\ref{eq:potential-off-hole}).
    At the online payment, the potential increases by at most $k$.
    Summing these bounds and \eqref{eq:non_accounted_offline_fetches} for the remaining offline evictions in window $\paymentWindow{i}$, we get
    \[
        (k+1)\Delta_i \offlineFetchCost-\Delta_i \potential\geq (k+1)-0-k= 1
    \]
    in general, and
    \[
        k\Delta_i \offlineFetchCost-\Delta_i \potential\geq k+1-k= 1
    \]
    for $k+1$ pages.
    As before, the change in offline delay cost is nonnegative, and the claimed bounds follow.

    \emph{Case 3: The requested page is never cached by $\OPT$ during the window.}
    Every request in~$\paymentRequests{i}$ arrives in $\paymentWindow{i}$ and remains pending for both schedules until payment $i$.
    By Lemma~\ref{lem:windows}, it follows that
    \[
        \frac{k+1}{\thresholdValue}\Delta_i \offlineDelayCost \geq \frac{k+1}{\thresholdValue}
        \sum_{r\in \paymentRequests{i}}\delay{r}(\paymentTime{i}-\arrival{r})=k+1.
    \]
    At the online payment, the potential increases by at most $k$, and potential changes at offline evictions can be covered by offline fetch cost using \eqref{eq:non_accounted_offline_fetches}.
    It follows that
    \[
        k\Delta_i \offlineFetchCost+\frac{k+1}{\thresholdValue}\Delta_i \offlineDelayCost-\Delta_i \potential \geq k+1-k=1\,,
    \]
    completing the proof.
\end{proof}

\paragraph*{Competitive guarantees}

Here, we show the main result of this section, stating that the $\thresholdValue$-\FIFO algorithm is $(2k+2)$-competitive when $\thresholdValue$ is chosen appropriately.

\thmMain*

\begin{proof}
    Choose $\thresholdValue=1$ so that the fetch and delay coefficients in~\eqref{eq:general-payment-accounting} are both $k+1$.
    Since $\potentialAt{0}=0$ and $\finalPotential\geq0$, Lemma~\ref{lem:payment-accounting} gives
    \[
        \paymentCount\leq(k+1)(\offlineFetchCost+\offlineDelayCost)=(k+1)\OPT.
    \]
    Observation~\ref{obs:alg-cost} then yields
    \[
        \ALG=(1+\thresholdValue)\paymentCount=2\paymentCount\leq(2k+2)\OPT.\qedhere
    \]
\end{proof}

The following theorem shows that the algorithm is optimal when $\numPages=k+1$.

\begin{theorem}
    \label{thm:fifo_k_plus_one}
    For $k\geq 1$ and $\numPages=k+1$, $\frac{k+1}{k}$-$\FIFO$ is strictly $(2k+1)$-competitive.
\end{theorem}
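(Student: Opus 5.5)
This follows the proof of \Cref{thm:main}, now using the sharper bound~\eqref{eq:k-plus-one-payment-accounting} from Lemma~\ref{lem:payment-accounting}, which is available because $\numPages=k+1$. I will choose the threshold so that the fetch and delay coefficients match. With $\thresholdValue=\frac{k+1}{k}$, the delay coefficient is $\frac{k+1}{\thresholdValue}=k$, which equals the fetch coefficient $k$. Bound~\eqref{eq:k-plus-one-payment-accounting} then reads
\[
\paymentCount+\finalPotential-\potentialAt{0}\leq k\offlineFetchCost+k\offlineDelayCost=k\cdot\OPT .
\]

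Next I remove the potential terms. Both caches start equal to $\initialCache$, so $\potentialAt{0}=0$. The potential is a sum of positive ranks, so $\finalPotential\geq 0$. Hence $\paymentCount\leq k\cdot\OPT$, with no additive constant.

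Finally, Observation~\ref{obs:alg-cost} gives the online cost as
\[
\ALG=(1+\thresholdValue)\paymentCount=\Bigl(1+\tfrac{k+1}{k}\Bigr)\paymentCount=\tfrac{2k+1}{k}\,\paymentCount\leq(2k+1)\OPT .
\]
Since the additive constant is $0$, this is strict $(2k+1)$-competitiveness.

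There is no real obstacle once Lemma~\ref{lem:payment-accounting} is available. The substantive work is the $k+1$-page case of that lemma, namely the guaranteed decrease $\Delta\potential\leq-1$ from Lemma~\ref{lem:potential-changes}(\ref{eq:potential-off-hole}) in Case~2. What remains here is the balancing choice $\thresholdValue=\frac{k+1}{k}$ and the check that the multiplier $1+\thresholdValue$ times $k$ gives exactly $2k+1$.
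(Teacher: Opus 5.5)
Your proposal is correct and follows the paper's proof step for step. You set $\thresholdValue=\frac{k+1}{k}$ so that both coefficients in~\eqref{eq:k-plus-one-payment-accounting} equal $k$, drop the potential terms using $\potentialAt{0}=0\le\finalPotential$ to get $\paymentCount\le k\,\OPT$, and multiply by $1+\thresholdValue=\frac{2k+1}{k}$ via Observation~\ref{obs:alg-cost}, exactly as the paper does.
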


\begin{proof}
    Choose $\thresholdValue=(k+1)/k$ so that the fetch and delay coefficients in~\eqref{eq:k-plus-one-payment-accounting} are both $k$.
    Since $\potentialAt{0}=0$ and $\finalPotential\geq0$, Lemma~\ref{lem:payment-accounting} gives
    \[
        \paymentCount\leq k(\offlineFetchCost+\offlineDelayCost)=k\OPT.
    \]
    Observation~\ref{obs:alg-cost} then yields
    \[
        \ALG=(1+\thresholdValue)\paymentCount=\frac{2k+1}{k}\paymentCount\leq(2k+1)\OPT.\qedhere
    \]
\end{proof}

\section{FIFO upper bounds for paging with deadlines}\label{sec:deadline-upper}
\paragraph*{The deadline-triggered \FIFO algorithm}
In this model, we define the \emph{deadline-triggered $\FIFO$} algorithm.
This algorithm fetches a noncached page when one of its pending requests reaches its deadline and applies the same replacement rule as in Section~\ref{sec:fifo}.
The fetch also serves every other pending request for that page.
We use $\paymentCount$ to denote the number of payments, $\paymentPage{i}$ for the page fetched, $\paymentTime{i}$ for the payment time, and $\paymentRequests{i}$ for the requests served with the payment, just as we did for payments with delay. Since the requests do not incur any delay cost before the deadline, the cost of the deadline-triggered $\FIFO$ algorithm is $\ALG=\paymentCount$.

\paragraph*{Upper bounds for the competitive ratio of deadline-triggered \FIFO}
We again differentiate between the general case $\numPages\geq k+1$ and the special case $\numPages=k+1$.
Both upper bounds follow from the charging argument of Section~\ref{sec:fifo-upper-bounds} once we observe that deadlines rule out its third case.

\begin{lemma}[Deadline accounting]\label{lem:deadline-accounting}
    The number of fetches of deadline-triggered $\FIFO$ satisfies
    \[
        \paymentCount+\finalPotential-\potentialAt{0}\leq(k+1)\offlineFetchCost.
    \]
    On $\numPages=k+1$ pages, the stronger bound $\paymentCount+\finalPotential-\potentialAt{0}\leq k\offlineFetchCost$ holds.
\end{lemma}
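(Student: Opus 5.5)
We reuse the charging argument of Lemma~\ref{lem:payment-accounting} almost verbatim, with the same rank potential~\eqref{eq:rank-potential}, the same payment windows $\paymentWindow{i}$, and the potential changes of Lemma~\ref{lem:potential-changes}. None of these ingredients depends on how a payment is triggered. Lemma~\ref{lem:windows} still holds for deadline-triggered \FIFO, except for its second item, which becomes: ``some request in $\paymentRequests{i}$ has deadline exactly $\paymentTime{i}$.'' Lemma~\ref{lem:potential-changes} only uses the \FIFO replacement rule, so it applies unchanged.

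The key steps are as follows. First, for every offline fetch not assigned to a window, we keep the bound $0\leq k\Delta\offlineFetchCost-\Delta\potential$ from~\eqref{eq:non_accounted_offline_fetches}. Then, for each payment $i$, we show
\[
1\leq (k+1)\Delta_i\offlineFetchCost-\Delta_i\potential,
\]
and the version with coefficient $k$ when $\numPages=k+1$. Here $\Delta_i$ again counts the offline evictions of $\paymentPage{i}$ during $\paymentWindow{i}$ together with payment $i$.

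Case~1 ($\paymentPage{i}\in\cache{\OPT}$ at the payment) and Case~2 (\OPT evicts $\paymentPage{i}$ during the window) are copied from the delay proof. In Case~1, the payment gives $\Delta\potential\leq-1$. In Case~2, the offline eviction of a page outside the online cache gives $\Delta\potential\leq 0$, or $\leq-1$ on $k+1$ pages. Combined with $\Delta\potential\leq k$ at the payment, this yields the required unit. Neither case used the delay term.

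The only new point, and the step needing care, is ruling out Case~3. Suppose \OPT never caches $\paymentPage{i}$ during $\paymentWindow{i}$. Let $r\in\paymentRequests{i}$ be the request whose deadline triggers payment~$i$, so $\deadline{r}=\paymentTime{i}$. By Lemma~\ref{lem:windows}, $r$ arrives in $\paymentWindow{i}$, so \OPT must serve $r$ at some time in $[\arrival{r},\paymentTime{i}]\subseteq\paymentWindow{i}$. That would require $\paymentPage{i}\in\cache{\OPT}$ at some time in the window, a contradiction.

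The delicate part is the timing convention at the endpoint $\paymentTime{i}$. \OPT moves before the online algorithm and deadlines expire last, so \OPT may fetch $\paymentPage{i}$ exactly at $\paymentTime{i}$. This is still before the payment in the event order, so it lands in Case~1, and the remaining offline fetches are handled by~\eqref{eq:non_accounted_offline_fetches}. We also need to check that a \OPT fetch and eviction of $\paymentPage{i}$ within the same window falls under Case~2, which the case split already covers. Hence $\Delta_i\offlineDelayCost$ is never needed.

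Finally, sum over all windows, which are disjoint per page by Lemma~\ref{lem:windows}. Add the remaining offline fetches and telescope the potential. This gives $\paymentCount+\finalPotential-\potentialAt{0}\leq(k+1)\offlineFetchCost$, and $\leq k\offlineFetchCost$ on $k+1$ pages, since \OPT has $\offlineDelayCost=0$ for deadlines.
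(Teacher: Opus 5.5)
Your proposal is correct and follows essentially the same route as the paper. You reuse the window and rank-potential machinery, rule out Case~3 because the triggering request $r$ with $\deadline{r}=\paymentTime{i}$ arrives in $\paymentWindow{i}$ and forces \OPT to cache $\paymentPage{i}$ there, and then sum Cases~1 and~2 together with~\eqref{eq:non_accounted_offline_fetches}; your remark that an offline fetch at exactly $\paymentTime{i}$ falls into Case~1 under the event order is a correct detail the paper leaves implicit.
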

\begin{proof}
    The payment windows, the rank potential, items~1 and~3 of Lemma~\ref{lem:windows}, and Lemma~\ref{lem:potential-changes} depend only on the replacement rule and the cache contents, not on what triggers a fetch, so they apply verbatim.
    Fix a payment $i$.
    Some request $r\in\paymentRequests{i}$ has deadline $\deadline{r}=\paymentTime{i}$, and it arrives during $\paymentWindow{i}$ by Lemma~\ref{lem:windows}.
    Any feasible offline schedule serves $r$ at some time in~$[\arrival{r},\paymentTime{i}]\subseteq\paymentWindow{i}$, so $\OPT$ caches $\paymentPage{i}$ at some point during $\paymentWindow{i}$.
    Hence, Case~3 in the proof of Lemma~\ref{lem:payment-accounting} cannot occur.
    Cases~1 and~2 use only offline fetches and potential changes, and give
    \[
        1\leq(k+1)\Delta_i\offlineFetchCost-\Delta_i\potential,
        \qquad\text{and}\qquad
        1\leq k\Delta_i\offlineFetchCost-\Delta_i\potential\ \text{ on }\numPages=k+1\text{ pages}.
    \]
    Summing over all payments and adding~\eqref{eq:non_accounted_offline_fetches} for the remaining offline fetches, exactly as in the proof of Lemma~\ref{lem:payment-accounting}, proves both bounds.
\end{proof}

Finally, we can show the competitiveness of deadline-triggered $\FIFO$.

\begin{theorem}\label{thm:fifo_deadlines_k_plus_one}
    On $\numPages=k+1$ pages, deadline-triggered $\FIFO$ is strictly $k$-competitive, and no deterministic online algorithm for paging with deadlines has competitive ratio less than $k$.
\end{theorem}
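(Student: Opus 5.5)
The statement has two halves, and I will treat them separately: an upper bound that follows from the accounting already done, and a lower bound inherited from classical paging.

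\emph{Upper bound.} I will apply the second bound of Lemma~\ref{lem:deadline-accounting}, which holds on $\numPages=k+1$ pages. It gives $\paymentCount+\finalPotential-\potentialAt{0}\leq k\offlineFetchCost$. By construction of the rank potential \eqref{eq:rank-potential}, both caches initially equal $\initialCache$, so $\potentialAt{0}=0$. Since $\potential$ is a sum of nonnegative ranks, $\finalPotential\geq 0$. Hence $\paymentCount\leq k\offlineFetchCost$.

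To turn this into a statement about costs, I note two facts. In the deadline model the cost of deadline-triggered $\FIFO$ is exactly $\ALG=\paymentCount$. The optimum pays only fetches, so $\OPT=\offlineFetchCost$. Therefore $\ALG\leq k\cdot\OPT$ with additive constant $0$, which is strict $k$-competitiveness. One small point to verify is that $\OPT$ may be assumed to evict only when fetching, so that both caches stay full. This is the same normalization already used in Section~\ref{sec:fifo-upper-bounds}, and I would simply invoke it.

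\emph{Lower bound.} I will reduce from classical paging. Classical paging on $k+1$ pages is the special case of paging with deadlines in which every request has $\deadline{r}=\arrival{r}$. I will space the requests at distinct times so that each must be served on arrival. For this special case, the deterministic lower bound of $k$ of Sleator and Tarjan~\cite{sleatorTarjanAmortizedEfficiency} holds already on $k+1$ pages, and it transfers directly.

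To be self-contained, I would sketch the adversary. At each step it requests the unique page outside the online cache, so the online algorithm faults on every request. Partition the request sequence into phases, each consisting of a maximal run containing at most $k$ distinct pages. The offline algorithm then faults at most once per phase: on a fault it evicts the page whose next request is furthest in the future. Each phase, which has length at least $k$, then costs the online algorithm at least $k$ against offline cost $1$. Letting the sequence grow makes any additive constant negligible, so the lower bound also holds for non-strict competitiveness and for clairvoyant algorithms. Online knowledge of deadlines gives no advantage, because every deadline equals the arrival time.

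The main obstacle is only this bookkeeping of the lower bound: checking phase lengths and the additive constant. The upper bound is immediate from Lemma~\ref{lem:deadline-accounting}.
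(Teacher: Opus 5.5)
Your proposal is correct and matches the paper's proof. For the upper bound you apply Lemma~\ref{lem:deadline-accounting} with $\potentialAt{0}=0$, $\finalPotential\geq0$, $\ALG=\paymentCount$ and $\OPT=\offlineFetchCost$, and for the lower bound you use the same reduction to classical paging via $\deadline{r}=\arrival{r}$ and Sleator--Tarjan; your self-contained phase sketch of the latter is a correct but optional elaboration of the cited result.
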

\begin{proof}
    For the upper bound, Lemma~\ref{lem:deadline-accounting} with $\potentialAt{0}=0$, $\finalPotential\geq0$, $\ALG=\paymentCount$ and $\OPT=\offlineFetchCost$ gives $\ALG\leq k\OPT$.
    For the lower bound, classical paging is the special case $\deadline{r}=\arrival{r}$ for all requests, and no deterministic paging algorithm has competitive ratio less than $k$, already on $k+1$ pages~\cite{sleatorTarjanAmortizedEfficiency}.
\end{proof}

\begin{theorem}\label{thm:fifo_deadlines}
    On $\numPages\geq k+1$ pages, deadline-triggered $\FIFO$ is strictly $(k+1)$-competitive.
\end{theorem}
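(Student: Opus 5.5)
The plan is to derive the bound directly from Lemma~\ref{lem:deadline-accounting}, in the same way as the upper-bound half of Theorem~\ref{thm:fifo_deadlines_k_plus_one}. In paging with deadlines, every request must be served by its deadline and no delay cost is incurred before it. So any feasible offline schedule pays only for fetches, and $\OPT=\offlineFetchCost$. Likewise, deadline-triggered $\FIFO$ pays exactly one unit per payment, so $\ALG=\paymentCount$.

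The first step is to invoke the general bound of Lemma~\ref{lem:deadline-accounting}, which holds for every $\numPages\geq k+1$:
\[
    \paymentCount+\finalPotential-\potentialAt{0}\leq(k+1)\offlineFetchCost .
\]
The second step is to recall the two facts about the rank potential from~\eqref{eq:rank-potential}. Both caches start equal to $\initialCache$, so $\potentialAt{0}=0$. The potential is a sum of nonnegative ranks, so $\finalPotential\geq 0$. Dropping $\finalPotential-\potentialAt{0}\geq0$ from the left-hand side gives $\paymentCount\leq(k+1)\offlineFetchCost$. Substituting $\ALG=\paymentCount$ and $\OPT=\offlineFetchCost$ then yields $\ALG\leq(k+1)\OPT$, with no additive constant, which is strict competitiveness.

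There is essentially no obstacle here, since the substantive work is already contained in Lemma~\ref{lem:deadline-accounting}. That lemma rules out Case~3 because every payment is triggered by a request whose deadline forces $\OPT$ to cache the page during the window. The only point to check is the modelling claim $\OPT=\offlineFetchCost$. This holds because feasibility forces service by the deadline, so no delay cost ever arises. With that fixed, the theorem follows in one line.
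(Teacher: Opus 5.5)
Your proposal is correct and matches the paper's proof: both apply Lemma~\ref{lem:deadline-accounting} with $\potentialAt{0}=0$, $\finalPotential\geq0$, $\ALG=\paymentCount$ and $\OPT=\offlineFetchCost$ to get $\ALG\leq(k+1)\OPT$ with no additive constant.
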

\begin{proof}
    Lemma~\ref{lem:deadline-accounting} with $\finalPotential\geq\potentialAt{0}=0$, $\ALG=\paymentCount$ and $\OPT=\offlineFetchCost$ gives ${\ALG\leq(k+1)\OPT}$.
\end{proof}

Together with Theorem~\ref{thm:deadlines_lower_bound}, the deterministic competitive ratio of paging with deadlines on $\numPages\geq k+2$ pages lies in $[k+\tfrac12,k+1]$.

\section{A deterministic lower bound for paging with deadlines}\label{sec:deadline-lower}

\thmDeadlinedLB*

To prove the theorem, we present a construction with $k+2$ pages and assume that the adversary always requests a page that is not in the cache of the online algorithm. Observe that such a construction would also work with $\numPages>k+2$ pages, if the adversary does not issue requests on $\numPages-k-2$ of the pages not in $\initialCache$. This is because an online algorithm that visits pages that have not been requested incurs additional moving costs and can thus only have higher costs. The cost of the online algorithm is at least the number of requests. Our goal is to find an offline schedule which pays a cost of at most $1$ every time an online algorithm pays about $k+1/2$. To analyze the costs of the offline algorithm, we define an offline budget which describes the costs an offline algorithm should stay within. With respect to this budget, we divide the request sequence into phases. We show that there exists a ``cheap'' schedule that allows the offline algorithm to have a bounded cost of at most~$1$ in each phase, while the online algorithm must incur a cost of $k$ or $k+1$. In the following, we describe the full construction.

\subsection{The offline choices and the adversary}
Without loss of generality, assume that \ALG moves only at times when a request's deadline expires.
Postponing a move until the next deadline does not increase the cost of the algorithm, and serves every request in time.

We assume that the adversary always requests pages that are currently not in the online cache, while finding an appropriate schedule for the offline algorithm. In particular, it maintains a budget for the offline algorithm. The adversary holds a set of so-called cheap offline schedules, which can shrink with each request.
Once no cheap schedule remains, the offline budget is increased by one unit, and the set of cheap schedules is refilled.
We call the sequence of requests between two budget increases a \emph{phase}.
We will show that a completed phase has at least $k$ requests, and at least every other phase has $k+1$ requests.
The competitive ratio is then shown using the property that each request increases \ALG's cost by at least one, while \OPT's cost increases by at most one per phase.

For each phase, the adversary chooses a distinguished page $h$ that is requested before the phase starts and whose deadline expires only at the end of the phase.
We call this request the \emph{held} request and write $\deadline{h}$ for its deadline.

We define the current offline \emph{budget}, denoted $\offlineBudget$, as the cost we want an offline algorithm to stay within. We define \emph{cheap schedules} as schedules that start in the configuration $\initialCache$, serve all requests so far whose deadline has passed, stay within the current offline budget, and are at the current time in a configuration with holes at the held page~$h$ and at one other page $z$.
We use $\candidates$ to denote the set of pages $z$ such that some schedule currently having holes~$h$ and $z$ is guaranteed to be cheap; thus $\candidates\subseteq \pageUniverse\setminus\{h\}$.
We maintain the following invariants for the possible current configurations, with at most one \emph{starred} member of $\candidates$ allowed to violate the third guarantee:
\begin{enumerate}
    \item\label{lb_inv_1} For every $z\neq h$, some schedule that currently has holes $h$ and $z$ has a cost of at most~$\offlineBudget+1$.
    \item\label{lb_inv_2} For every $z\in \candidates$, such a schedule costs at most $\offlineBudget$.
    \item\label{lb_inv_3} For every \emph{unstarred} $z\in \candidates$, a schedule of cost at most $\offlineBudget+1$ ends with holes $h,z$ and has also served the held request.
\end{enumerate}

\paragraph*{Construction initialization}

We restrict the construction to the pages in the initial cache and two pages $u_0$ and $v_0$ that are initially not in the cache, and write $\pageUniverse$ only for these pages.
At time $0$, the adversary releases a request at $u_0$ with deadline $1$ and one at $v_0$ with deadline $2$.
At time $1$, a new phase starts with
\[
    h=v_0,\qquad \candidates=\pageUniverse\setminus\{u_0,v_0\},\qquad \offlineBudget=1,
    \qquad\text{no starred members of\ } \candidates.
\]
We assume that the considered offline schedules end at the current time.
To denote the moves of an algorithm, we write $x\to y$ for a move where a page $x$ is evicted and $y$ is fetched instead.
For each $z\ne v_0$, we describe an offline schedule that ends with holes $v_0,z$ through a sequence of moves at time $1$, maintaining the invariants:
\begin{enumerate}
    \item To end with holes at $v_0, u_0$, perform the move $p\to u_0\to p$ for any page $p$ from $\initialCache$. This move incurs a cost of two, serves the request at $u_0$ and ends with holes $v_0,u_0$ (inv.~\ref{lb_inv_1}).
    \item For each $z\in \candidates$, the move $z\to u_0$ gives the cheap schedule with $v_0,z$ missing (inv.~\ref{lb_inv_2}).
    \item For each $z\in \candidates$, the path $z\to v_0\to u_0$ costs two, serves the held request and ends with holes $v_0,z$ (inv.~\ref{lb_inv_3}).
\end{enumerate}

\paragraph*{Continuing a phase}
The goal of the adversary is to make sure that a phase continues as long as possible. To continue a phase, the adversary chooses any page $p\neq h$ not in the cache maximizing~${|\candidates \setminus\{p\}|}$.
This is possible as long as there is some remaining $y \in \candidates \setminus\{p\}$.
In this case, release a request at $p$ with deadline $\deadline{p}$ strictly between the current time and $\deadline{h}$.
We advance time to~$\deadline{p}$, update the cheap page set to the nonempty set $\candidates \setminus\{p\}$, and erase the star at $p$ if it had one.

The invariants are preserved as follows:
\begin{enumerate}
    \item For every $z\neq h$, consider the cheap schedule ending with holes $h,y$, followed by the move~$z\to y$ after $p$ has been requested, or no action if $z=y$. This serves the new request and ends with holes $h,z$ at cost at most $\offlineBudget+1$. Note that this strategy has served all previously expired requests (inv.~\ref{lb_inv_1}).
    \item For every $z\in \candidates\setminus\{p\}$, the cheap schedule ending with holes $h,z$ serves the new request at~$p$ for free, and ends with holes $h,z$ at cost at most $\offlineBudget$ (inv.~\ref{lb_inv_2}).
    \item Analogously, for every unstarred $z\in \candidates\setminus\{p\}$, the schedule of cost $\offlineBudget+1$ that serves $h$ and ends with holes $h,z$ serves the new request at $p$ for free (inv.~\ref{lb_inv_3}).
\end{enumerate}

\paragraph*{Starting a new phase}
Suppose we cannot continue the current phase.
Choose any absent page $p\neq h$.
This page must satisfy $p \in \candidates$, since the phase could not be continued.
Release a new request at $p$ before~$\deadline{h}$, with deadline $\deadline{p}=\deadline{h}+1$.
Advance the current time to $\deadline{h}$ and start a new phase with new assignments
\[
    h'=p,\quad \candidates' = \begin{cases}
        \pageUniverse\setminus\{h,p\} & \text{if }p\text{ starred},   \\
        \pageUniverse\setminus\{p\}   & \text{if }p\text{ unstarred},
    \end{cases}
    \quad \offlineBudget'=\offlineBudget+1,
    \quad h\text{ starred if }p\text{ unstarred}.
\]

Thus, if $p$ was previously starred, the phase starts with $k$ candidates, all unstarred; otherwise, it starts with $k+1$ candidates, one starred.
\begin{figure}
    \centering
    \begin{tikzpicture}[x=0.9cm,y=0.5cm,font=\small]
        \foreach \yoff/\ltitle/\lcost/\rtitle/\rcost in {
        0/{(a) New hole $z\notin\{h,p\}$}/{old cost $\leq \offlineBudget+1$}/{(b) New hole $h^*$}/{old cost $\leq \offlineBudget$},
        -6.7/{(c) New cheap hole $z\notin\{h,p\}$}/{old cost $\leq \offlineBudget$}/{(d) Extra cheap hole $h$}/{old cost $\leq \offlineBudget+1$; $h$ served}} {
        \foreach \offset/\title/\cost in {0/\ltitle/\lcost,7/\rtitle/\rcost} {
                \begin{scope}[shift={(\offset,\yoff)}]
                    \node[align=center] at (2,4.5) {\textbf{\title}\\\cost};
                    \fill[blue!7] (3,0.65) rectangle (4,3.55);
                    \draw[black!45,densely dotted] (3,0.65) -- (3,3.55);
                    \foreach \row/\page in {3/h,2/p,1/z} {
                            \draw[black!15] (0,\row) -- (4,\row);
                            \node[left] at (-0.1,\row) {$\page$};
                        }
                    \draw[dashed] (0,3) -- (3,3) node[threshold] {};
                    \draw[dashed] (1.8,2) node[release] {} -- (4,2) node[threshold] {};
                \end{scope}
            }
        }

        \draw[opt,dotted] (0,2) -- (0.9,2);
        \draw[opt] (0.9,2) -- (3,2) -- (3,3) -- (3.7,3);
        \begin{scope}[xshift=6.3cm]
            \draw[opt,dotted] (0,1) -- (0.9,1);
            \draw[opt] (0.9,1) -- (2.5,1) -- (2.5,3) -- (3,3) -- (3, 1) -- (3.7,1);
        \end{scope}
        \node at (5.5,-0.1) {At $\deadline{h}$: cost $\leq \offlineBudget+2=\offlineBudget'+1$; new held page $p$.};

        \begin{scope}[shift={(0,-6.7)}]
            \draw[opt,dotted] (0,1) -- (0.9,1);
            \draw[opt,dotted] (7,3) -- (7.3,3) -- (7.3,1) -- (7.9,1);
            \draw[opt] (0.9,1) -- (3,1) -- (3,3) -- (3.7,3);
            \begin{scope}[xshift=6.3cm]
                \draw[opt] (0.9,1) -- (3.7,1);
            \end{scope}
            \node at (5.5,-0.1) {At $\deadline{h}$: cost $\leq \offlineBudget+1=\offlineBudget'$; new held page $p$.};

            \node[release] at (0.35,-1.4) {};
            \node[right] at (0.6,-1.4) {release};
            \node[threshold] at (4.5,-1.4) {};
            \node[right] at (4.7,-1.4) {deadline};
            \fill[blue!7] (8,-1.65) rectangle (8.7,-1.15);
            \node[right] at (8.8,-1.4) {new phase};
        \end{scope}
    \end{tikzpicture}%
    \caption{Updating the offline schedules at a phase boundary.
        Top row: schedules with budget~$\offlineBudget'+1$ (inv.~\ref{lb_inv_1} and~\ref{lb_inv_3}).
        In (a), the old witness costs at most $\offlineBudget+1$ and ends with holes at $h,z$. The move~$p \to h$ extends it to a new schedule missing $p,z$ of cost $\offlineBudget'+1$ with the held page $p$ already served (inv.~\ref{lb_inv_3}).
        In (b), the old cheap schedule missing $h,p$ is extended into a schedule with cost~$\offlineBudget'+1$ missing $h,p$. Since $p$ is not served yet, $h$ is starred.
        Bottom row: cheap schedules with budget~$\offlineBudget'$ (inv.~\ref{lb_inv_2}); the previous schedules end with holes at $h,p$.
        In (c), the old cheap witness costs at most~$\offlineBudget$.
        Swapping the page $z$ for $h$ gives the new cheap schedule missing $p,z$.
        In (d), available when $p$ was unstarred, the old witness missing $p,h$ has already served $h$ at cost at most $\offlineBudget+1$.}
    \label{fig:phase-restart}
\end{figure}

We again prove that the three invariants hold for the new phase.
Figure~\ref{fig:phase-restart}(a) and~(b) illustrate the new schedules of cost $\offlineBudget'+1$ for maintaining invariants \ref{lb_inv_1} and \ref{lb_inv_3}.
Figure~\ref{fig:phase-restart}(c) and~(d) illustrate the new cheap schedules of cost $\offlineBudget'$ (inv.~\ref{lb_inv_2}), showing how the old schedule for unstarred $p$ can be extended to give an extra cheap schedule.
\begin{enumerate}
    \item For every $z\notin \{p,h\}$, the old schedule ending with holes $h,z$ can be extended by the move~$p \to h$ to serve the old held request and end with $p,z$ missing at cost at most~$\offlineBudget+2=\offlineBudget'+1$.
          For the schedule ending with holes at $p,h$, the old cheap schedule missing~$p,h$ can be extended by a round trip $z\to h\to z$ serving $h$ with final cost~$\offlineBudget+2=\offlineBudget'+1$ (inv.~\ref{lb_inv_1}).
    \item For every $z\notin \{h,p\}$, consider the old cheap schedule missing $h,p$. At time $\deadline{h}$, move~$z \to h$. This serves the old held request and leaves $p,z$ missing, at cost at most $\offlineBudget+1=\offlineBudget'$.
          If~$h\in \candidates'$, then $p$ was previously unstarred, meaning that there exists a schedule of cost at most $\offlineBudget+1$ that ends with holes $h,p$ and has served the old held request. This schedule can stay put, giving a cheap schedule ending with holes $h,p$ (inv.~\ref{lb_inv_2}).
    \item For each $z \notin \{h,p\}$, the old schedule ending with holes $h,z$ and cost $\offlineBudget+1$ can be extended by the move~$p \to h$ at time $\deadline{h}$ to serve the old held request and the new request at $p$, and end with $p,z$ missing at cost at most $\offlineBudget+2=\offlineBudget'+1$ (inv.~\ref{lb_inv_3}).
\end{enumerate}

\subsection{Cost analysis}

Let $N \geq 2$ and consider the input $\sigma_N$ constructed by the adversary with $N$ requests released in total.

\begin{lemma}
    \label{lem:deadlines_alg_lower_bound}
    \ALG incurs cost at least $N$ on $\sigma_N$.
\end{lemma}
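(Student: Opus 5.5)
The plan is to charge one online fetch to each of the $N$ requests. The key property of the construction is that the adversary only ever requests a page that is currently outside \ALG's cache, so each request forces a fetch that no other request can supply.

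First, fix an arbitrary request $r$ with page $\requestPage{r}$ and release time $\arrival{r}$. At time $\arrival{r}$ the page $\requestPage{r}$ is not in $\cache{\ALG}$, by the adversary's rule. Since $r$ must be served by $\deadline{r}$, \ALG has to fetch $\requestPage{r}$ at some time in $[\arrival{r},\deadline{r}]$. Define $\phi(r)$ as the first fetch of $\requestPage{r}$ by \ALG at or after time $\arrival{r}$ (in the fixed event order, so that fetches strictly after the release are meant).

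Second, show that $\phi$ is injective. Requests on different pages are mapped to fetches of different pages, so it suffices to treat two requests $r\neq r'$ on the same page $v$, with $r$ released first. At the release of $r'$, the page $v$ is again absent from the online cache. If $\phi(r)$ happened before $\arrival{r'}$, then $\phi(r')$ is a later fetch, since it occurs after $\arrival{r'}$. Otherwise $v$ remained absent from time $\arrival{r}$ until $\arrival{r'}$. In that case $\phi(r)=\phi(r')$ is still possible in principle, but I would rule it out using the construction. Each new request at $v$ is released only when $v$ is absent and after the previous deadline on $v$ has passed, because time is advanced to the deadline before the next request is issued. So the earlier request $r$ has already been served, and its serving fetch precedes $\arrival{r'}$.

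The main obstacle is making this timing argument rigorous. I would verify that every release happens at a time when all earlier requests have expired, except for the current held request $h$. In the ``continuing'' step, time is advanced to $\deadline{p}$. In the ``new phase'' step, time is advanced to $\deadline{h}$. The held request $h$ itself is never requested again before $\deadline{h}$, because $p\neq h$ is required. Hence any two requests on the same page have disjoint service intervals, separated by a fetch of that page. With injectivity in hand, the number of online fetches is at least $N$, so $\ALG(\sigma_N)\geq N$.
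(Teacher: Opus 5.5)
Your proposal is correct and follows the paper's own argument. Every request is to a page absent from \ALG's cache at its release, and the only request pending at any release is the held one, which lies on a different page; hence requests on the same page have disjoint windows and each forces its own fetch. Your explicit injection $\phi$ spells out the paper's two-sentence proof in more detail.
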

\begin{proof}
    The request sequence is designed such that any requested page in this sequence is not in the cache of \ALG at the time when it is requested.
    Moreover, the construction ensures that no windows of two requests at the same page overlap, so each request must be served by a separate online move.
\end{proof}

\begin{lemma}
    \label{lem:deadlines_opt_upper_bound}
    \OPT incurs cost at most $\frac{2N+2k}{2k+1}+1$ on $\sigma_N$.
\end{lemma}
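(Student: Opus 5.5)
The plan is to count phases and use invariant~\ref{lb_inv_1}: at any moment some schedule of cost at most $\offlineBudget+1$ serves every request whose deadline has passed. At the end of $\sigma_N$, the requests still pending (the held request and possibly the last released one) can be served by at most one extra fetch each. So $\OPT(\sigma_N)\le \offlineBudget_{\mathrm{final}}+1$ up to a small additive term that must be absorbed into the stated constant. Since $\offlineBudget$ starts at $1$ and increases by exactly one per phase boundary, it then suffices to bound the number of phase boundaries in terms of $N$.

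The key step is a lower bound on the length of each phase. A phase starting with candidate set $\candidates$ continues as long as the adversary can pick an absent page $p\neq h$ with $\candidates\setminus\{p\}\neq\emptyset$. At any time the online cache misses exactly two pages. Each request at some $p\in\candidates$ removes one candidate, and a request at $p\notin\candidates$ removes none. Hence a phase starting with $|\candidates|=c$ lasts for at least $c-1$ continuation requests before being stuck, and the stuck phase ends with one additional request, the new held page. This gives at least $c$ requests attributable to the phase.

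By the phase-restart rule, the starting size $c$ depends on the star:
\begin{itemize}
\item if the chosen $p$ was starred, the new phase has $c=k$ with no star;
\item if $p$ was unstarred, it has $c=k+1$ with $h$ starred.
\end{itemize}
After a $k$-phase there is no star, so the next $p$ is unstarred and the next phase is a $(k+1)$-phase. Consequently, among any two consecutive completed phases at least one has $k+1$ requests, and every two phases consume at least $2k+1$ requests. The initial phase has $|\candidates|=k$ and is preceded by the two initial requests at $u_0,v_0$, which must be accounted for separately. I expect the term $2k$ in the numerator to come from this initial offset together with the last, incomplete phase.

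The resulting bookkeeping: with $m$ completed phases we need $N\ge \lfloor m/2\rfloor (2k+1)+\dots$, which gives $m\le \frac{2N}{2k+1}+O(1)$. Then
\[
\OPT\le 1+m+1 \le \frac{2N+2k}{2k+1}+1
\]
after careful handling of parities and endpoints. The main obstacle is matching these constants exactly. This requires pinning down which request is attributed to which phase (held requests sit at a boundary), whether the final partial phase costs an extra unit, and whether the invariant-\ref{lb_inv_1} witness already serves the currently held request or needs one more fetch. I would first write $N$ as $2$ initial requests plus the sum of phase lengths, then check the claimed bound separately for $m$ even and $m$ odd. Finally I would verify that the pending held request is covered by the ``$+1$'' in $\offlineBudget+1$, using invariant~\ref{lb_inv_3} when available and otherwise one extra fetch absorbed into the additive slack.
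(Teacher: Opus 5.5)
\textbf{Gap: the bound $\OPT\le\offlineBudget+1$ is not established.} Your phase counting matches the paper's. Short phases are followed by long ones, a completed phase has at least $k$ (short) or $k+1$ (long) requests, and with $\offlineBudget$ phases started this gives $N\ge 2+k(\offlineBudget-1)+\lfloor(\offlineBudget-1)/2\rfloor$, hence $\offlineBudget\le\frac{2N+2k}{2k+1}$. The problem is the offline cost, which you leave open. You start from invariant~1 (cost at most $\offlineBudget+1$) and pay one more fetch for the held request, which gives $\offlineBudget+2$. The counting does not leave a full unit of slack to absorb this. Take the extreme case where all phases have minimal length and alternate short/long, and the last request has just opened phase $\offlineBudget$ with $\offlineBudget$ odd. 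Then $N=2+k(\offlineBudget-1)+\frac{\offlineBudget-1}{2}$, so $\frac{2N+2k}{2k+1}+1=\offlineBudget+1+\frac{3}{2k+1}<\offlineBudget+2$ for $k\ge 2$. The ``small additive term'' therefore cannot be hidden in the stated constant. Your final display $1+m+1$ asserts $\OPT\le\offlineBudget+1$, but nothing in your argument yields it.

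\textbf{The missing idea is invariant~2.} Since $\candidates$ is never empty, there is a cheap schedule of cost at most $\offlineBudget$, not $\offlineBudget+1$, that has served every request whose deadline has passed. At the end of $\sigma_N$ the held request is the only pending one, so your ``possibly the last released one'' never arises. Each continuation advances time to the new request's deadline, and the cheap schedules serve that request for free because $p$ is not one of their holes. One extra fetch of $h$ then gives $\OPT\le\offlineBudget+1$, exactly as in the paper.

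\textbf{Your fallback is incomplete.} Invariant~3 gives $\offlineBudget+1$ only when $\candidates$ contains an unstarred page. In the remaining case, $\candidates=\{s\}$ with $s$ starred, you would have to find extra slack. It does exist: the current phase is then long and has already had at least $k$ continuation requests, which yields $\offlineBudget\le\frac{2N-2}{2k+1}$, enough to pay for $\offlineBudget+2$. Your proposal does not carry this out, and the case split is unnecessary once you use invariant~2.
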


\begin{proof}
    Observe that the budget $\offlineBudget$ corresponds to the number of phases started: initialization sets it to one, and each new phase increases it by one.
    We call a phase \emph{short} if it starts with~$|\candidates|=k$ and \emph{long} if it starts with $|\candidates|=k+1$.
    Observe that, if a phase is short, then it has no star, and therefore the next phase is long.
    Each step within a phase releases a request and removes at most one page from $\candidates$.
    Since a phase ends when $\candidates$ would become empty, a short phase accounts for at least $k$ requests, and a long phase accounts for at least~$k+1$ requests.
    The first $\offlineBudget-1$ phases are complete; at least $\lfloor(\offlineBudget-1)/2\rfloor$ of them are long.
    Counting also the first request of the last phase gives
    \[
        N\geq 1+k(\offlineBudget-1)+\left\lfloor\frac{\offlineBudget-1}{2}\right\rfloor
        \geq \frac{2k+1}{2}\offlineBudget-k,
        \qquad\text{hence}\qquad
        \offlineBudget\leq\frac{2N+2k}{2k+1}.
    \]
    We can now choose any cheap offline schedule as the offline algorithm. This results in a cost of at most $\offlineBudget$, and the algorithm has served every request before its deadline expired.
    To serve the held request, at most one additional move is needed, giving the claimed bound.
\end{proof}

We can finally prove the main theorem of the section:

\begin{proof}[Proof of Theorem~\ref{thm:deadlines_lower_bound}]
    Let $1\leq \rho < k+\frac{1}{2}$ and assume \ALG is $\rho$-competitive.
    By Lemmas \ref{lem:deadlines_alg_lower_bound} and \ref{lem:deadlines_opt_upper_bound}, for every $N\geq 2$, there is an input $\sigma_N$ such that
    \[
        \ALG(\sigma_N)-\rho\OPT(\sigma_N)
        \geq N\left(1-\frac{2\rho}{2k+1}\right)
        -\rho\left(1+\frac{2k}{2k+1}\right).
    \]
    This tends to infinity with $N\rightarrow\infty$, contradicting any fixed additive constant in the competitive guarantee.
\end{proof}

\section{Lower bounds for threshold algorithms}\label{sec:threshold-lower}

In this section, we focus on proving lower bounds for threshold algorithms.
We show that our analysis of the $\thresholdValue$-$\FIFO$ algorithm is tight on $\numPages\geq k+2$ pages, if the threshold $\thresholdValue$ is fixed.
Additionally, we show that any threshold algorithm has a competitive ratio of at least $2k + \tfrac32$ for online paging with delay for $\numPages\geq k+2$.

\subsection{Tightness for fixed-threshold \FIFO}

\begin{theorem}
    \label{thm:fifo_lower_bound}
    The competitive ratio of $\thresholdValue$-$\FIFO$ is at least $2k+2$ for every fixed threshold value $\thresholdValue>0$ and $\numPages\geq k+2$ pages.
\end{theorem}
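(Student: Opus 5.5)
The plan is to show that the charging bound of Lemma~\ref{lem:payment-accounting}, namely $\paymentCount\lesssim(k+1)\offlineFetchCost+\frac{k+1}{\thresholdValue}\offlineDelayCost$, is asymptotically tight in each of its two terms separately. By Observation~\ref{obs:alg-cost}, $\ALG=(1+\thresholdValue)\paymentCount$, so:
\begin{itemize}
\item a family of instances with $\offlineDelayCost=0$ and $\paymentCount\approx(k+1)\offlineFetchCost$ gives ratio $\to(1+\thresholdValue)(k+1)\ge 2k+2$ whenever $\thresholdValue\ge1$;
\item a family with $\offlineFetchCost=O(1)$ and $\paymentCount\approx\frac{k+1}{\thresholdValue}\offlineDelayCost$ gives ratio $\to(1+\thresholdValue)\frac{k+1}{\thresholdValue}\ge 2k+2$ whenever $\thresholdValue\le1$.
\end{itemize}
Every instance will be repeated $M$ times with $M\to\infty$, so additive constants disappear and the non-strict ratio is at least $2k+2$.

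\emph{Delay-tight family ($\thresholdValue\le1$).} Let $\OPT$ keep its cache $\initialCache=\{c_1,\dots,c_k\}$ fixed, and let $x$ be a page outside it.
\begin{itemize}
\item Each delay function rises to $\thresholdValue$ within a short time $\varepsilon$, stays flat until a far horizon $T$, and only then grows without bound. This is allowed, since the functions need only be continuous, nondecreasing and unbounded.
\item A cycle starts with a request at $x$. $\thresholdValue$-\FIFO pays for it after time $\varepsilon$ and evicts the front page, say $c_1$.
\item The adversary then requests, one at a time, the page \FIFO has just evicted, each from $\initialCache$. Each triggers a payment.
\item By the queue order, $x$ moves from the back to the front of the queue. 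It is evicted exactly at the $k$-th such payment, which restores the cache to $\initialCache$ with $x$ as the hole.
\end{itemize}
So each cycle costs \FIFO $k+1$ payments. $\OPT$ pays nothing for the $c_j$-requests, since they are always cached, and pays exactly $\thresholdValue$ per $x$-request if it fetches $x$ once just before $T$. This gives $\offlineFetchCost\le1$ and $\offlineDelayCost=M\thresholdValue$. The routine checks are that the flat part of the delay functions is legitimate and that the cycle returns to the same queue state.

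\emph{Fetch-tight family ($\thresholdValue\ge1$).} This needs $\numPages\ge k+2$; on $k+1$ pages Theorem~\ref{thm:fifo_k_plus_one} forbids it. In the analysis, the slack of one unit per window comes from Case~2: $\OPT$ evicts $\paymentPage{i}$ while it is a hole of \FIFO, the potential change is $0$ rather than $\le-1$, and then \FIFO pays for $\paymentPage{i}$ while the potential increases by the full $k$. To make every payment waste as much as possible, I would design phases in which:
\begin{itemize}
\item $\OPT$ performs exactly one fetch, of a page it holds with zero delay at every request, serving each request on arrival;
\item \FIFO performs $k+1$ threshold payments;
\item the phase ends with both caches, and the \FIFO queue, in a relabelled copy of the starting configuration.
\end{itemize}
The second hole lets $\OPT$ swap between two pages absent from \FIFO's cache, i.e.\ both $\OPT$'s evicted page and its fetched page lie outside \FIFO's cache, so the potential does not drop. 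The adversary then requests, in queue order, the pages \FIFO is about to evict, exactly as in the lower bound for classical \FIFO, plus one extra request on a page $\OPT$ evicted during that page's window.

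Concretely, I would generalize the $k=2$ sequence of Figure~\ref{fig:counterexample:teaser}, which swaps the ``outer'' and ``inner'' pages and repeats. I would choose request times so that each request arrives after $\OPT$ has already fetched its page, making $\offlineDelayCost=0$.

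The main obstacle is the periodic construction for the fetch-tight family. The teaser achieves only $5$ payments per $2$ offline fetches for $k=2$, not $3$ per fetch, so it only shows the ratio exceeds $2k+1$. Plain repetition of it is not enough. I would first try to make the ratio $k+1$ hold asymptotically rather than per phase: chain phases so that each phase's first \FIFO payment doubles as the previous phase's extra Case-2 payment. That would leave only an $O(1)$ start-up loss.

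If an exact periodic pattern is hard to find, a fallback is to track the potential $\potential$ of \eqref{eq:rank-potential} along a greedy adversary. At each step the adversary requests a \FIFO hole that $\OPT$ caches, and whenever $\potential$ would decrease it lets $\OPT$ swap between the two holes instead. One would then argue that every offline fetch forces $k+1$ online payments up to the bounded total variation $\frac{k(k+1)}2$ of $\potential$.
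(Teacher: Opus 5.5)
Your reduction to two extremal families is sound. The delay-tight family for $\thresholdValue\le 1$ is correct as written and gives $(k+1)(1+1/\thresholdValue)\ge 2k+2$; it is essentially the second case in the proof of Proposition~\ref{prop:threshold-transfer}.

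The gap is the other half. For $\thresholdValue>1$ that family only yields $(k+1)(1+1/\thresholdValue)<2k+2$, so everything rests on the fetch-tight family, and you never construct it. You call it the main obstacle, concede that repeating the teaser gives only $5$ payments per $2$ offline fetches, and then offer two heuristics, neither of which is a proof. The greedy-adversary fallback would have to show that the adversary can always arrange the following:
\begin{itemize}
\item a Case-2 window whose offline eviction has $\Delta\potential=0$;
\item $\cache{\ALG}=\cache{\OPT}$ just before that window's payment;
\item on average $k$ Case-1 payments per such window.
\end{itemize}
That amounts to the construction itself. As it stands, the theorem is unproved for every $\thresholdValue>1$.

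\emph{How to close the gap.} A short cycle suffices.
\begin{itemize}
\item Take pages $\fixedPage{1},\dots,\fixedPage{k-1},A,B,C$. The \FIFO queue is $A,\fixedPage{k-1},\dots,\fixedPage{1}$ (front to back) with holes $B,C$. \OPT holds $\{\fixedPage{1},\dots,\fixedPage{k-1},B\}$, after one initial offline fetch.
\item Release a request at $B$, which \OPT serves on arrival. Let \OPT swap $B\to C$. Then release a request at $C$ that becomes critical before the one at $B$; \OPT serves it on arrival.
\item \FIFO pays for $C$, evicting $A$, and then for $B$, evicting $\fixedPage{k-1}$.
\item Next, request each page \FIFO has just evicted, namely $\fixedPage{k-1},\dots,\fixedPage{1}$. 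All of these are cached by \OPT. The last of these payments evicts $C$; for $k=1$, the payment for $B$ already evicts $C$.
\end{itemize}
This reproduces the starting state under the relabeling $B\mapsto A$, $C\mapsto B$, $A\mapsto C$. The next cycle's request at the new $B$, i.e.\ the old $C$, is released right after \FIFO evicts it, while \OPT still holds it. Each cycle therefore has $k+1$ payments, one offline fetch, and zero offline delay.

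\emph{Comparison with the paper.} The paper avoids your case split by putting both kinds of slack into a single run of $2k+2$ payments. \OPT leaves the request on $a$ pending, which gives a Case-3 payment, and performs the swap $c\to b$, which feeds a Case-2 payment. So \OPT pays $1+\thresholdValue$ per run, and the bound $2k+2$ holds uniformly in $\thresholdValue$. Your split, once completed, proves more: the sharper bound $(k+1)(1+\thresholdValue)\max\{1,1/\thresholdValue\}$. This matches Lemma~\ref{lem:payment-accounting} exactly for every threshold. With deadlines in place of delays, the same cycle also shows that Theorem~\ref{thm:fifo_deadlines} is tight for deadline-triggered \FIFO.
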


\begin{proof}
    In the proof, we present repeatable adversarial runs that force the $\thresholdValue$-$\FIFO$ algorithm to incur a cost of $(1+\thresholdValue)(2k+2)$ per run, while an offline schedule pays at most $1+\thresholdValue$ per run, apart from a constant initialization and final-service cost.
    An instance of a repeatable adversarial run is depicted in Figure~\ref{fig:tightness}.

    \begin{figure}
        \begin{center}
            \begin{tikzpicture}[x=0.75cm, y=0.75cm, font=\small]

                \draw[alg] (0,0) -- (2,0) -- (2,-1) -- (6,-1) -- (6,4) -- (10,4) -- (10,3) -- (12,3);
                \draw[alg] (0,2) -- (3,2) -- (3,1) -- (7,1) -- (7,0) --(11,0) -- (11,4) -- (12,4);
                \draw[alg] (0,3) -- (4,3) -- (4,2) -- (8,2) -- (8,-1)  -- (12,-1)node[right] {\scriptsize moves next};
                \draw[alg] (0,4) -- (5,4) -- (5,3) -- (9,3) -- (9,2) -- (12,2);

                \node at (0,4) [left] {$\fixedPage{1}$};
                \node at (0,3) [left] {$\fixedPage{2}$};
                \node at (0,2) [left] {$\fixedPage{3}$};
                \node at (0,1) [left] {$a$};
                \node at (0,0) [left] {$b$};
                \node at (0,0) [left,xshift=-0.4cm] {\scriptsize moves first};
                \node at (0,-1) [left] {$c$};

                \draw[opt] (0,-1) -- (6.5,-1) -- (6.5,0) -- (12,0);
                \draw[opt] (0,2) -- (12,2);
                \draw[opt] (0,3) -- (12,3);
                \draw[opt] (0,4) -- (12,4);

                \draw[dashed] (0,1) -- (3,1) node[threshold]{};
                \draw[dashed] (0,-1) -- (2,-1) node[threshold]{};
                \draw[dashed] (2.2, 0) node[release] {} -- (7,0) node[threshold] {};
                \draw[dashed] (3.2,2) node[release] {} -- (4,2) node[threshold]{};
                \draw[dashed] (4.2,3) node[release] {} -- (5,3) node[threshold]{};
                \draw[dashed] (5.2,4) node[release] {} -- (6,4) node[threshold]{};

                \draw[dashed] (6.2, -1) node[release] {} -- (8,-1) node[threshold]{};
                \draw[dashed] (7.2, 1) node[release] {} -- (12,1);
                \draw[dashed] (8.2, 2) node[release] {} -- (9, 2) node[threshold]{};
                \draw[dashed] (9.2, 3) node[release] {} -- (10, 3) node[threshold]{};
                \draw[dashed] (10.2, 4) node[release] {} -- (11, 4) node[threshold]{};
                \draw[dashed] (11.2, 0) node[release] {} -- (12, 0);

                \draw[alg, yshift=-2pt] (0,-2) -- (0.5,-2) node[right,text=black] {\FIFO};
                \draw[opt, yshift=2pt] (2.1,-2) -- (2.6,-2) node[right,text=black] {\OPT};
                \node[release] at (4.3,-2) {};
                \node[right] at (4.5,-2) {request released};
                \node[threshold] at (8.5,-2) {};
                \node[right] at (8.7,-2) {delay reaches $\thresholdValue$};
            \end{tikzpicture}
        \end{center}
        \caption{A request sequence where $\thresholdValue$-\FIFO has competitive ratio at least $2k+2$ (illustrated is~$k=4$).
            The online cache trajectories are blue, and an offline witness is orange.
            $\FIFO$ fetches~$2k+2$ times per run, while the offline witness fetches once and leaves the request on $a$ pending.
            The start and end situations of the run are identical up to renaming $b$ and $c$, so the run can be repeated indefinitely.}
        \label{fig:tightness}
    \end{figure}

    We repeatedly issue a request right after $\FIFO$ evicts a page, and initially issue a request at each absent page used in the construction.
    A page becomes \emph{critical} when its pending request reaches delay $\thresholdValue$.
    Choose each delay function to mimic a deadline, with delay $\thresholdValue$ instead of the deadline:
    it is zero until a short continuous ramp just before its designated critical time, then constant at $\thresholdValue$ through the finite construction.
    To ensure the technical constraint that delay functions are unbounded, the delay functions may grow unboundedly after the finite construction.
    The ramps can be chosen short enough that the offline moves described below precede the relevant increases in delay.

    The request sequence is defined as follows:
    Use $k+2$ pages, labeled $\fixedPage{1}, \fixedPage{2}, \ldots, \fixedPage{k-1}$, $a$, $b$ and $c$.
    Label the initial $\FIFO$ queue, from oldest to newest, as $b,\fixedPage{k-1},\ldots,\fixedPage{1}$, leaving $a,c$ outside the cache.
    We make pages critical in the order $(c,a,\fixedPage{k-1},\dots,\fixedPage{1},b,c,\fixedPage{k-1}, \dots, \fixedPage{1})$, then relabel $b$ as $c$ and $c$ as $b$, i.e., the next run starts with $(b,a,\dots)$ in terms of the old labeling.
    It is easy to see that \FIFO faults on every request, as every run of $k+1$ requests is on distinct pages, and \FIFO covers the most recently fetched $k$ requests.

    Next, we show an upper bound on the cost of \OPT.
    To this end, we define an offline witness that first fetches $c$ and evicts $b$, at one unit of initialization cost, so its cache is~$\{\fixedPage{1},\dots,\fixedPage{k-1},c\}$.
    We describe one run of this offline witness, using it to bound $\OPT$.
    The offline witness leaves the request at page $a$ pending, accounting for at most $\thresholdValue$ delay per run.
    Note that \FIFO has the most recently fetched $k$ requests in its cache at any given time, meaning that, before $b$ becomes critical, \FIFO has $\fixedPage{1},\dots,\fixedPage{k-1}$ and $a$ in its cache.
    Therefore,~$c$ is already evicted by \FIFO, and the request on page $c$ that will turn critical after~$b$ turns critical is already issued.
    Accordingly, the offline witness performs the move $c\rightarrow b$, thereby serving the requests at $b$ and $c$ without incurring any delay cost.
    The remaining requests of the run are to pages cached by the offline witness.
    Finally, the offline witness has~$\fixedPage{1},\dots, \fixedPage{k-1}$ and $b$ in its cache, with $b$ being renamed to $c$.
    In the end, the offline witness has to fetch $a$ once to serve all outstanding requests.

    We conclude that the offline witness incurs cost at most $1+\thresholdValue$ per run, while \FIFO faults on each of the $2k+2$ requests, incurring cost $(1+\thresholdValue)(2k+2)$ ($1$ for fetching and $\thresholdValue$ for the accumulated delay).
    After $N$ runs, $\ALG\geq(1+\thresholdValue)(2k+2)N$ and $\OPT\leq(1+\thresholdValue)N+2$.
    Letting~$N$ tend to infinity proves the lower bound even when the competitive guarantee allows an additive constant.
\end{proof}

\begin{remark}
    This lower bound also holds for the \LRU algorithm, which keeps the most recently accessed pages in the cache, since requests are only to pages that are absent from the online cache, and \LRU and \FIFO coincide on such sequences.
\end{remark}

\subsection{Transferring deadline lower bounds}

We now relate deadline lower bounds to the threshold class of Definition~\ref{def:threshold}, allowing arbitrary deterministic replacement rules.

\begin{lemma}
    \label{lem:threshold-simulation}
    Let $\ALG$ be a $\rho$-competitive threshold algorithm with threshold $\thresholdValue>0$.
    Then there exists a $\frac{\rho}{1+\thresholdValue}$-competitive deterministic algorithm $\ALG'$ for paging with deadlines that is clairvoyant iff $\ALG$ is clairvoyant.
\end{lemma}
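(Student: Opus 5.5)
We simulate \ALG on a delay instance derived from the deadline instance. Given a deadline instance $\sigma$, $\ALG'$ builds online a delay instance $\tilde\sigma$ with the same requests. Each request $r$ gets a delay function $\delay{r}$ that is zero until shortly before $\deadline{r}$, ramps continuously to exactly $\thresholdValue$ at $\deadline{r}$, and afterwards grows to $+\infty$ (say linearly, with a steep slope $M$). $\ALG'$ runs \ALG on $\tilde\sigma$ and copies its cache moves. The first issue is making this simulation online and of the right clairvoyance type. If \ALG is nonclairvoyant, its action at time $t$ depends only on delay curves up to $t$, so we cannot afford a ramp that starts before the deadline is known. We avoid this by placing the whole ramp in $[\deadline{r}-\epsilon,\deadline{r}]$. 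In the nonclairvoyant deadline model, the deadline is not revealed until it expires, so we instead use the timing convention: the ramp happens ``at'' the deadline instant, with the algorithm acting before expiry. Alternatively, we time-shift the instance so each deadline becomes known just before its ramp. This is the step I expect to be the main obstacle, and I would handle it by an explicit reparametrization of time: insert, at each deadline instant, a short interval during which the ramp occurs. This keeps online/clairvoyance information exactly matched in both directions.

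\textbf{Feasibility.} Since \ALG is a threshold algorithm, a page $v$ is fetched exactly when $\delay{v}(t)$ first reaches $\thresholdValue$. Pending delay on $v$ is $0$ until some request's ramp begins, and it reaches $\thresholdValue$ no later than that request's deadline. Hence every request is served by its deadline, so $\ALG'$ is feasible. Moreover, every fetch of \ALG is a threshold payment of cost exactly $1+\thresholdValue$, which extends Observation~\ref{obs:alg-cost} to any replacement rule, since only the trigger matters. So $\ALG(\tilde\sigma)=(1+\thresholdValue)\ALG'(\sigma)$. Fetches not triggered by the threshold cannot occur under Definition~\ref{def:threshold}.

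\textbf{Comparing \OPT.} Any feasible deadline schedule for $\sigma$ serves each request by its deadline. In $\tilde\sigma$ it therefore pays delay only from ramps already begun, and this delay can be made arbitrarily small by shrinking $\epsilon$, or zero if ramps lie entirely at the deadline instant after \OPT moves. Hence $\OPT(\tilde\sigma)\le\OPT(\sigma)$. Combining with $\rho$-competitiveness of \ALG gives
\[
(1+\thresholdValue)\ALG'(\sigma)=\ALG(\tilde\sigma)\le\rho\,\OPT(\tilde\sigma)+c\le\rho\,\OPT(\sigma)+c.
\]
Dividing by $1+\thresholdValue$ gives the claimed ratio $\rho/(1+\thresholdValue)$ with additive constant $c/(1+\thresholdValue)$. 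If a limiting argument in $\epsilon$ is needed, we take ramps of length $\epsilon_N\to0$ relative to the instance size; the additive error per request then vanishes.
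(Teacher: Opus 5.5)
You have the paper's overall plan: simulate \ALG on a delay instance in which each request's delay ramps to exactly $\thresholdValue$ at its deadline, copy its moves, use that every fetch costs $1+\thresholdValue$, and compare optima. But the step you flag as the main obstacle is the actual content of the proof, and you do not construct it. The real-time version you do analyze fails in two ways.

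First, a request with $\arrival{r}=\deadline{r}$ (classical paging) leaves no room for a continuous ramp with $\delay{r}(0)=0$, so the simulated algorithm fetches only after the deadline.

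Second, and more seriously, $\OPT(\tilde\sigma)\le\OPT(\sigma)$ is false, and the error does not shrink with $\epsilon$, because a deadline schedule may be forced to serve $r$ inside its ramp. Take $k=1$ and initial cache $\{z\}$:
\begin{enumerate}
\item release $r_1$ on $x$, due at time $10$;
\item once $r_1$'s delay has reached $\thresholdValue/2$, release a short-window request on $y$;
\item after that request's deadline, release a second request on $x$, due at $10$.
\end{enumerate}
The deadline optimum pays $2$: it fetches $y$, then serves both $x$-requests together. Every schedule for the delay instance pays at least $2+\min\{1,\thresholdValue/2\}$: either it serves $r_1$ after its delay reached $\thresholdValue/2$, or it needs a third fetch or a full $\thresholdValue$ of delay on another request. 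A clairvoyant \ALG sees $r_1$'s whole delay function at release, so the ramp is fixed before the adversary places the later requests. Neither shrinking $\epsilon$ nor tying it to the instance size (which $\ALG'$ does not know) repairs this.

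What is missing is an explicit online time reparametrization. The paper gives each distinct deadline a virtual interval $[\deadline{r},\deadline{r}+\epsilon_r]$ on which the delay ramps from $0$ to $\thresholdValue$. It maps every real event time $t$, including arrivals and offline moves, to a virtual time $\tau(t)$ strictly before the earliest unprocessed interval. Then every feasible deadline schedule becomes a zero-delay schedule, giving $\OPT(\sigma)\le\OPT(\sigma')$ exactly. All of \ALG's fetches fall inside intervals and are copied to the corresponding real deadline instant.

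The nontrivial point is doing this while fixing each delay function at release. A naive rule of ``insert an interval at each deadline and shift later times'' moves the virtual ramp of an already released request whenever a later request has an earlier deadline. That changes a delay function \ALG has already seen. The paper avoids this as follows:
\begin{enumerate}
\item it assigns each new deadline the middle third of the gap between its neighbouring intervals;
\item at each event, it advances the virtual clock only to the midpoint toward the earliest remaining interval;
\item it checks that no delay accrues between intervals, so the simulation does nothing during clock advances;
\item it checks that each request's own ramp forces its service by the end of its interval.
\end{enumerate}
Without this construction, your claim that the reparametrization ``keeps online/clairvoyance information exactly matched'' is unsupported. Your \OPT comparison is written for the non-reparametrized version, where it is wrong.
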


\begin{proof}
    We give the construction here only in the clairvoyant case.
    For the nonclairvoyant version, the same construction can be applied once information is revealed.

    We construct a deterministic deadline algorithm $\ALG'$ that, for an instance $\sigma'$ of paging with deadlines, simulates $\ALG$ on a delay instance $\sigma$ constructed online from $\sigma'$.
    The two algorithms start with $\initialCache$, and $\ALG'$ copies every simulated fetch and eviction.

    Let $\tau(t)$ denote the virtual time assigned to real time $t$, before processing its arrivals and deadlines.
    We set $\tau$ at arrivals and deadlines and extend it linearly in between.
    At the release of each request $r'$, we assign its counterpart $r$ an effective deadline $\deadline{r}$ and a duration~$\epsilon_r>0$.
    The intervals $[\deadline{r},\deadline{r}+\epsilon_r]$ for distinct original deadlines are disjoint and ordered in the same way as those deadlines; equal deadlines share an interval.
    We retain each interval until its original deadline has been processed, even if its requests have already been served.

    Virtual time starts at zero.
    At each real arrival or deadline time $t$, first advance virtual time to the midpoint between its current value and the start of the earliest remaining interval, or by one unit if no interval remains; this defines $\tau(t)$.
    As justified below, no fetch occurs during this advance.
    Release all requests arriving at $t$ at virtual time $\tau(t)$.
    For a deadline already announced, reuse its interval.
    Otherwise, write $r^-$ for the request preceding the new request and $r^+$ for the following one in the original deadline order of the already announced deadlines.
    If there is no preceding interval, use $\deadline{r^-}=\tau(t)$ and $\epsilon_{r^-}=0$; if there is no following interval, use $\deadline{r^+}=\deadline{r^-}+\epsilon_{r^-}+1$.
    Define the interval for the new request as the middle third of the gap between the preceding and following intervals:
    \[
        \epsilon_r=\frac{\deadline{r^+}-\deadline{r^-}-\epsilon_{r^-}}{3},
        \qquad
        \deadline{r}=\deadline{r^-}+\epsilon_{r^-}+\epsilon_r.
    \]
    The gap is positive, so $\epsilon_r>0$ and
    \[
        \deadline{r^-}+\epsilon_{r^-}
        <\deadline{r}<\deadline{r}+\epsilon_r<\deadline{r^+}.
    \]
    These choices preserve the interval order without changing any previous assignment and ensure $\arrival{r}=\tau(\arrival{r'})<\deadline{r}$.
    Define the delay function by
    \[
        \delay{r}(x)
        =\frac{\thresholdValue}{\epsilon_r}
        \max\{0,x-\deadline{r}+\arrival{r}\},\qquad x\geq0.
    \]
    It is continuous, nondecreasing, unbounded, and zero at arrival.
    It starts growing at absolute virtual time $\deadline{r}$ and reaches $\thresholdValue$ at $\deadline{r}+\epsilon_r$.
    The entire function is fixed at release using only information already available.

    After processing all arrivals at a real deadline $\deadline{r'}$, run the simulation through $\deadline{r}+\epsilon_r$, copy its cache actions at real time $\deadline{r'}$ in their simulated order, and remove the interval.
    Equal deadlines are processed together.
    Inductively, all requests with earlier deadlines have already been served, whereas requests with later deadlines have zero delay throughout this advance.
    Every request with this deadline is served by the end of its interval, since its individual delay reaches $\thresholdValue$ there and thus forces a fetch at or before that time.
    Multiple pending requests on one page may cause their aggregate delay to reach the threshold earlier, which only advances their service.
    All pending requests therefore have zero delay between successive intervals, justifying the advances to $\tau(t)$.
    This defines a feasible deterministic online algorithm $\ALG'$, including for requests arriving at their own deadlines, since arrivals are processed first.

    For the offline comparison, observe that any feasible schedule for $\sigma'$ can be transformed into a schedule for $\sigma$ with no delay, by mapping actions according to the virtual time $\tau$.
    The number of fetches is unchanged, proving $\OPT(\sigma)\leq\OPT(\sigma')$.

    Every fetch of $\ALG'$ corresponds to a fetch of $\ALG$ with associated delay cost $\thresholdValue$.
    Thus, $\ALG(\sigma)=(1+\thresholdValue)\ALG'(\sigma')$.
    Using that \ALG is $\rho$-competitive (i.e., $\ALG(\sigma)\leq\rho\OPT(\sigma)+c$ for some constant $c$), it follows that
    \[
        \ALG'(\sigma')
        =\frac{1}{1+\thresholdValue}\ALG(\sigma)
        \leq \frac{\rho}{1+\thresholdValue}\OPT(\sigma)+\frac{c}{1+\thresholdValue}
        \leq\frac{\rho}{1+\thresholdValue}\OPT(\sigma')+\frac{c}{1+\thresholdValue}.\qedhere
    \]
\end{proof}

\begin{proposition}\label{prop:threshold-transfer}
    Let $k\geq1$ and $\numPages\geq k+1$ be integers, and let $\rho\geq1$.
    If the competitive ratio of deterministic paging with deadlines on $\numPages$ pages with cache size $k$ is at least $\rho$, then no threshold algorithm for paging with delay on the same universe can have competitive ratio less than $\rho+k+1$.
\end{proposition}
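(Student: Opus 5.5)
The plan is to derive two separate lower bounds on the competitive ratio $\rho_{\ALG}$ of a threshold algorithm $\ALG$ with threshold $\thresholdValue>0$, one that grows with $\thresholdValue$ and one that shrinks with it. Then I balance them over $\thresholdValue$.

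\emph{First bound (large thresholds).} By Lemma~\ref{lem:threshold-simulation}, if $\ALG$ were $\rho_{\ALG}$-competitive, there would be a deterministic algorithm for paging with deadlines with ratio $\rho_{\ALG}/(1+\thresholdValue)$. It runs on the same universe, since the simulation only copies cache actions. By hypothesis this ratio is at least $\rho$, so
\[
    \rho_{\ALG}\geq \rho(1+\thresholdValue).
\]

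\emph{Second bound (small thresholds).} I would show that
\[
    \rho_{\ALG}\geq (k+1)\frac{1+\thresholdValue}{\thresholdValue}
\]
using only $k+1$ of the pages: the initial cache $\initialCache$ and one extra page. The adversary repeatedly releases a request at the unique page currently outside $\ALG$'s cache. Since $\ALG$ is deterministic, the adversary can simulate it. As in the proof of Theorem~\ref{thm:fifo_lower_bound}, each delay function is zero until a short ramp, then constant at $\thresholdValue$ until the end of the finite construction, and afterwards it grows unboundedly. When the ramp completes, $\ALG$ pays, evicts some page, and the next request is released on that page. After $N$ requests, Observation-style counting gives $\ALG=(1+\thresholdValue)N$, because each payment serves exactly one request at delay exactly $\thresholdValue$.

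For the offline side, let $p$ be the page among the $k+1$ that receives the fewest requests; it receives at most $N/(k+1)$ of them. The offline schedule first caches the other $k$ pages at cost at most $1$, and keeps them cached for the whole construction, serving their requests for free. It leaves all requests on $p$ pending until the end of the construction, each contributing delay $\thresholdValue$ by the flat shape of the delay functions. Finally it fetches $p$ once. This gives
\[
    \OPT\leq \frac{\thresholdValue N}{k+1}+2.
\]
Letting $N\to\infty$ yields the bound, even with an additive constant allowed. The step needing care is the delay shapes: the ramps must be placed so that each request's delay saturates at exactly $\thresholdValue$ when $\ALG$ pays, and stays at $\thresholdValue$ while $\OPT$ keeps it pending. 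This mirrors the construction in Theorem~\ref{thm:fifo_lower_bound}, so I expect it to go through.

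\emph{Balancing.} For every $\thresholdValue>0$,
\[
    \rho_{\ALG}\geq\max\Bigl\{\rho(1+\thresholdValue),\ (k+1)\frac{1+\thresholdValue}{\thresholdValue}\Bigr\}.
\]
The first term is increasing in $\thresholdValue$ and the second is decreasing, and both equal $\rho+k+1$ at $\thresholdValue=(k+1)/\rho$. Hence:
\begin{itemize}
    \item if $\thresholdValue\geq (k+1)/\rho$, then $\rho(1+\thresholdValue)\geq\rho+k+1$;
    \item otherwise, $(k+1)(1+1/\thresholdValue)>k+1+\rho$.
\end{itemize}
In both cases $\rho_{\ALG}\geq\rho+k+1$. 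The main obstacle is the second bound, because the rest is a direct application of Lemma~\ref{lem:threshold-simulation} plus elementary calculus.
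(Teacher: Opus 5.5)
Your proposal is correct and follows the paper's proof essentially verbatim. The paper also splits at $\thresholdValue=(k+1)/\rho$, invokes Lemma~\ref{lem:threshold-simulation} for large thresholds, and for small thresholds uses the same $(k+1)$-page adversary, with an offline schedule that omits the least-requested page to get $\OPT\leq\frac{N\thresholdValue}{k+1}+2$. The only difference is that the paper fixes the timing explicitly (releases at times $2i$, unit linear ramps, flat at $\thresholdValue$ until time $2N$), which also guarantees that each new request arrives strictly after the eviction it targets; you leave this detail implicit.
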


\begin{proof}
    Fix a threshold algorithm $\ALG$ with threshold $\thresholdValue$.
    First, suppose that $\thresholdValue\geq(k+1)/\rho$, and assume for a contradiction that $\ALG$ has competitive ratio less than $\rho+k+1$.
    Then Lemma~\ref{lem:threshold-simulation} gives a deterministic algorithm $\ALG'$ with competitive ratio less than $\frac{\rho+k+1}{1+\thresholdValue}\leq \frac{\rho+k+1}{1+(k+1)/\rho}=\rho$, contradicting the assumed deadline lower bound.

    Now, suppose that $0<\thresholdValue<(k+1)/\rho$.
    Fix $N\geq1$ and a set of $k+1$ pages containing the initial cache.
    At each time $2i$, for $i=0,\ldots,N-1$, issue one request to the unique page in this set outside $\ALG$'s cache.
    Let its delay grow linearly from zero to $\thresholdValue$ during the next time unit, stay at $\thresholdValue$ until absolute time $2N$, and grow linearly and unboundedly thereafter.
    $\ALG$ makes exactly one fetch per round, and thus $\ALG=(1+\thresholdValue)N$.

    For an offline comparator, choose a page receiving the fewest requests.
    This page receives at most $N/(k+1)$ requests.
    An offline schedule may use at most one initial fetch to cache all other pages, keep this cache until time $2N$, and then fetch the omitted page if necessary.
    All requests to the other pages are served at arrival, and each request to the omitted page incurs delay at most $\thresholdValue$.
    Therefore,
    \[
        \OPT\leq\frac{N\thresholdValue}{k+1}+2.
    \]
    By letting $N\to\infty$, this gives a lower bound on the competitive ratio of at least
    \[
        \frac{(1+\thresholdValue)(k+1)}{\thresholdValue}
        =(k+1)\left(1+\frac{1}{\thresholdValue}\right)
        >(k+1)\left(1+\frac{\rho}{k+1}\right)
        = \rho+k+1.\qedhere
    \]
\end{proof}

The above results imply the correctness of the following theorem:

\begin{theorem}
\label{cor:threshold-lower}
    No threshold algorithm has competitive ratio less than $2k+\frac 3 2$ for online paging with delay on at least $k+2$ pages.
\end{theorem}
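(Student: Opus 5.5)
This follows by combining Theorem~\ref{thm:deadlines_lower_bound} with Proposition~\ref{prop:threshold-transfer}. Fix $k\geq1$ and a universe with $\numPages\geq k+2$ pages.

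First, Theorem~\ref{thm:deadlines_lower_bound} states that every deterministic clairvoyant online algorithm for paging with deadlines on such a universe has competitive ratio at least $k+\tfrac12$. Nonclairvoyant algorithms form a subclass of clairvoyant ones, so the bound holds for all deterministic deadline algorithms. The hypothesis of Proposition~\ref{prop:threshold-transfer} is therefore satisfied with $\rho=k+\tfrac12$. The side conditions $\rho\geq1$ and $\numPages\geq k+1$ hold, since $k\geq1$ and $\numPages\geq k+2$.

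Second, Proposition~\ref{prop:threshold-transfer} gives that no threshold algorithm for paging with delay on the same universe has competitive ratio less than
\[
\rho+k+1=\left(k+\tfrac12\right)+k+1=2k+\tfrac32 .
\]
This holds for every threshold $\thresholdValue>0$ and every deterministic replacement rule, because the proposition covers both the regime $\thresholdValue\geq(k+1)/\rho$ and the regime $\thresholdValue<(k+1)/\rho$.

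The only point to check is that the notions of competitive ratio match. Theorem~\ref{thm:deadlines_lower_bound} is a non-strict lower bound, meaning it allows an additive constant. Lemma~\ref{lem:threshold-simulation}, used inside the proposition, also carries the additive constant $c/(1+\thresholdValue)$. So the non-strict deadline bound transfers to a non-strict bound for threshold algorithms with delay, and the argument needs no further work.
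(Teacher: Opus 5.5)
Your proposal is correct and takes the same approach as the paper: instantiate Proposition~\ref{prop:threshold-transfer} with $\rho=k+\tfrac12$, supplied by Theorem~\ref{thm:deadlines_lower_bound}, which gives $\rho+k+1=2k+\tfrac32$. Your checks of the side conditions, and of the fact that the non-strict notion of competitiveness is preserved through the transfer, are sound and make explicit what the paper's one-line proof leaves implicit.
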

\begin{proof}
    Apply Proposition~\ref{prop:threshold-transfer} with $\rho=k+\tfrac12$, using Theorem~\ref{thm:deadlines_lower_bound}.
\end{proof}

\section{Discussion and future work}

One open question is to determine the optimal deterministic competitive ratio for paging with delay on at least $k+2$ pages.
Our results leave a gap between the lower bound of~$2k+1$ and the upper bound of $2k+2$.
Although the latter bound is tight for fixed-threshold $\FIFO$, it need not be optimal among all online algorithms.
Moreover, our lower bound of~$2k+\tfrac32$ for fixed-threshold algorithms shows that attaining a competitive ratio of $2k+1$, if possible, would require an algorithm outside this class.
Even for $\numPages=4$ and $k=2$, the optimal deterministic competitive ratio for paging with delay remains between $5$ and $6$.

Paging with deadlines helps isolate the cache-replacement difficulty from the additional decision of when to serve requests under delay costs.
For this variant, our deterministic bounds are $k+\tfrac12$ and $k+1$.
Understanding this gap may also help clarify the remaining gap for paging with delay.
In particular, Proposition~\ref{prop:threshold-transfer} transfers deterministic deadline lower bounds to the fixed-threshold class, with an additive increase of $k+1$.
However, this connection alone does not resolve the optimal competitive ratio for unrestricted delay algorithms.

Another important direction is to understand the benefit of randomization against an oblivious adversary.
We leave both closing the remaining gaps in the deterministic competitive ratios and investigating the power of randomization to future work.

\section*{Acknowledgments}

We would like to thank Yann Disser for useful discussions and feedback. Darya Melnyk is supported by German Research Foundation (DFG), Schwerpunktprogramm SPP 2378 - ReNO-2 (grant 511099228), 2025-2029. Linda Thelen is supported by a doctoral scholarship from the Studienstiftung des deutschen Volkes. Yuyi Wang is supported by the Science and Technology Innovation Program of Hunan Province (No.2026RC1078). 

\paragraph*{Disclosure of AI assistance}
An initial draft of the $(2k+2)$-competitive upper bound (Theorem~\ref{thm:main}) was obtained with the assistance of GPT-5.6 Sol to develop the argument.
After we proved the $(2k+1)$-competitive bound on $k+1$ pages by hand, asking GPT-5.6 Sol for potential-based proofs for the competitive ratio of classical \FIFO pointed us to the rank potential, which we then used to reprove and unify the two results.
The lower bound of $k+\frac12$ for deadline algorithms was obtained with the help of Claude Opus 5.5, after turning the basic idea of the construction of \Cref{thm:fifo_lower_bound} into a computational search over instances on four pages with $k=2$.
We then generalized the resulting rule on small instances to arbitrary universe size $\numPages=k+2$.
The proof formalizations in Lean were written entirely by Claude Opus 5.5, and the problem and algorithm definitions were written both by human authors and Claude Opus 5.5.

The authors independently checked all resulting proofs and take full responsibility for the content and accuracy of the paper, including any errors that may remain.

\bibliography{references}

@article{feuerstein1998memory,
  author    = {Esteban Feuerstein and Alberto Marchetti-Spaccamela},
  title     = {Memory Paging for Connectivity and Path Problems in Graphs},
  journal   = {Journal of Graph Algorithms and Applications},
  volume    = {2},
  number    = {6},
  pages     = {1--22},
  year      = {1998},
  doi       = {10.7155/jgaa.00009},
}

@inproceedings{krneticMelnykWangWattenhofer20,
  author    = {Predrag Krneti{\'c} and Darya Melnyk and Yuyi Wang and Roger Wattenhofer},
  title     = {The {$k$}-Server Problem with Delays on the Uniform Metric Space},
  booktitle = {Proceedings of the 31st International Symposium on Algorithms and Computation ({ISAAC})},
  series    = {LIPIcs},
  volume    = {181},
  pages     = {61:1--61:13},
  publisher = {Schloss Dagstuhl -- Leibniz-Zentrum f{\"u}r Informatik},
  year      = {2020},
  doi       = {10.4230/LIPIcs.ISAAC.2020.61},
}

@article{sleatorTarjanAmortizedEfficiency,
  author    = {Daniel D. Sleator and Robert E. Tarjan},
  title     = {Amortized Efficiency of List Update and Paging Rules},
  journal   = {Communications of the {ACM}},
  volume    = {28},
  number    = {2},
  pages     = {202--208},
  year      = {1985},
  doi       = {10.1145/2786.2793},
}

@article{Gupta22CachingWithTimeWindows,
  author    = {Anupam Gupta and Amit Kumar and Debmalya Panigrahi},
  title     = {Caching with Time Windows and Delays},
  journal   = {{SIAM} Journal on Computing},
  volume    = {51},
  number    = {4},
  pages     = {975--1017},
  year      = {2022},
  note      = {(STOC 2020)},
  doi       = {10.1137/20M1346286},
}

@inproceedings{Gupta22HittingSet,
  author    = {Anupam Gupta and Amit Kumar and Debmalya Panigrahi},
  title     = {A Hitting Set Relaxation for {$k$}-Server and an Extension to Time-Windows},
  booktitle = {Proceedings of the 62nd {IEEE} Annual Symposium on Foundations of Computer Science ({FOCS})},
  pages     = {504--515},
  publisher = {{IEEE}},
  year      = {2021},
  doi       = {10.1109/FOCS52979.2021.00057},
}

@inproceedings{Touitou25NonclairvoyantOSD,
  author    = {Noam Touitou},
  title     = {Nearly-Optimal Algorithm for Non-Clairvoyant Service with Delay},
  booktitle = {Proceedings of the 42nd International Symposium on Theoretical Aspects of Computer Science ({STACS})},
  series    = {LIPIcs},
  volume    = {327},
  pages     = {74:1--74:21},
  publisher = {Schloss Dagstuhl -- Leibniz-Zentrum f{\"u}r Informatik},
  year      = {2025},
  doi       = {10.4230/LIPIcs.STACS.2025.74},
}

@inproceedings{Touitou23ImprovedDeterministic,
  author    = {Noam Touitou},
  title     = {Improved and Deterministic Online Service with Deadlines or Delay},
  booktitle = {Proceedings of the 55th Annual {ACM} Symposium on Theory of Computing ({STOC})},
  pages     = {761--774},
  publisher = {{ACM}},
  year      = {2023},
  doi       = {10.1145/3564246.3585107},
}

@inproceedings{Bienkowski18OSDLine,
  author    = {Marcin Bienkowski and Artur Kraska and Pawe{\l} Schmidt},
  title     = {Online Service with Delay on a Line},
  booktitle = {Proceedings of the 25th International Colloquium on Structural Information and Communication Complexity ({SIROCCO})},
  series    = {LNCS},
  volume    = {11085},
  pages     = {237--248},
  publisher = {Springer},
  year      = {2018},
  doi       = {10.1007/978-3-030-01325-7\_22},
}

@article{Azar21OSD,
  author    = {Yossi Azar and Arun Ganesh and Rong Ge and Debmalya Panigrahi},
  title     = {Online Service with Delay},
  journal   = {{ACM} Transactions on Algorithms},
  volume    = {17},
  number    = {3},
  pages     = {23:1--23:31},
  year      = {2021},
  note      = {(STOC 2017)},
  doi       = {10.1145/3459925},
}

@article{Fiat91CompetitivePaging,
  author    = {Amos Fiat and Richard M. Karp and Michael Luby and Lyle A. McGeoch and Daniel D. Sleator and Neal E. Young},
  title     = {Competitive Paging Algorithms},
  journal   = {Journal of Algorithms},
  volume    = {12},
  number    = {4},
  pages     = {685--699},
  year      = {1991},
  doi       = {10.1016/0196-6774(91)90041-V},
}

@article{Belady,
  author    = {L. A. Belady},
  title     = {A Study of Replacement Algorithms for a Virtual-Storage Computer},
  journal   = {{IBM} Systems Journal},
  volume    = {5},
  number    = {2},
  pages     = {78--101},
  year      = {1966},
  doi       = {10.1147/sj.52.0078},
}

@inproceedings{CompAlgoOnlProblems,
  author    = {Mark S. Manasse and Lyle A. McGeoch and Daniel D. Sleator},
  title     = {Competitive Algorithms for On-Line Problems},
  booktitle = {Proceedings of the 20th Annual {ACM} Symposium on Theory of Computing ({STOC})},
  pages     = {322--333},
  publisher = {{ACM}},
  year      = {1988},
  doi       = {10.1145/62212.62243},
}

@article{metricalTaskSystem,
  author    = {Allan Borodin and Nathan Linial and Michael E. Saks},
  title     = {An Optimal On-Line Algorithm for Metrical Task System},
  journal   = {Journal of the {ACM}},
  volume    = {39},
  number    = {4},
  pages     = {745--763},
  year      = {1992},
  doi       = {10.1145/146585.146588},
}

@article{weightedPaging,
  author    = {Nikhil Bansal and Niv Buchbinder and Joseph Naor},
  title     = {A Primal-Dual Randomized Algorithm for Weighted Paging},
  journal   = {Journal of the {ACM}},
  volume    = {59},
  number    = {4},
  pages     = {19:1--19:24},
  year      = {2012},
  doi       = {10.1145/2339123.2339126},
}

@inproceedings{MLCachingRohatgi,
  author    = {Dhruv Rohatgi},
  title     = {Near-Optimal Bounds for Online Caching with Machine Learned Advice},
  booktitle = {Proceedings of the 31st Annual {ACM-SIAM} Symposium on Discrete Algorithms ({SODA})},
  pages     = {1834--1845},
  publisher = {{SIAM}},
  year      = {2020},
  doi       = {10.1137/1.9781611975994.112},
}

@article{MLCachingLykouris,
  author    = {Thodoris Lykouris and Sergei Vassilvitskii},
  title     = {Competitive Caching with Machine Learned Advice},
  journal   = {Journal of the {ACM}},
  volume    = {68},
  number    = {4},
  pages     = {24:1--24:25},
  year      = {2021},
  doi       = {10.1145/3447579},
}

@inproceedings{onlineMatching,
  author    = {Yuval Emek and Shay Kutten and Roger Wattenhofer},
  title     = {Online Matching: Haste Makes Waste!},
  booktitle = {Proceedings of the 48th Annual {ACM} Symposium on Theory of Computing ({STOC})},
  pages     = {333--344},
  publisher = {{ACM}},
  year      = {2016},
  doi       = {10.1145/2897518.2897557},
}

@inproceedings{Dooly1998TCPDA,
  author    = {Daniel R. Dooly and Sally A. Goldman and Stephen D. Scott},
  title     = {{TCP} Dynamic Acknowledgment Delay: Theory and Practice (Extended Abstract)},
  booktitle = {Proceedings of the 30th Annual {ACM} Symposium on Theory of Computing ({STOC})},
  pages     = {389--398},
  publisher = {{ACM}},
  year      = {1998},
  doi       = {10.1145/276698.276792},
}

@article{onlineTCPjournal,
  author    = {Daniel R. Dooly and Sally A. Goldman and Stephen D. Scott},
  title     = {On-Line Analysis of the {TCP} Acknowledgment Delay Problem},
  journal   = {Journal of the {ACM}},
  volume    = {48},
  number    = {2},
  pages     = {243--273},
  year      = {2001},
  doi       = {10.1145/375827.375843},
}

\end{document}